\documentclass[11pt]{article}
\usepackage[utf8]{inputenc}
\usepackage{hyperref}
\usepackage{amsmath}
\usepackage{amsthm}
\usepackage{amssymb}
\usepackage{authblk}
\usepackage{cite}
\usepackage{lipsum}
\usepackage{geometry}
\usepackage{enumerate}
\usepackage{array}
\usepackage{multicol}
\usepackage{multirow}
\usepackage{xcolor}
\textwidth=6.6 in
\textheight=8.5 in
\oddsidemargin=0.10in
\evensidemargin=0.10in
\topmargin=0.0in
\headsep=18pt
\headheight=12pt
\newtheorem{theorem}{Theorem}[section]
\numberwithin{theorem}{section}
\newtheorem{lemma}[theorem]{Lemma}
\newtheorem{defi}[theorem]{Definition}

\newtheorem{prop}[theorem]{Proposition}

\DeclareMathOperator{\tr}{Tr}

\newcommand{\F}{\mathbb F}
\newcommand{\Fbn}{\mathbb{F}_{2^n}}

\newcommand{\Fp}{\mathbb{F}_{p}}
\newcommand{\Fpn}{\mathbb{F}_{p^n}}

\newcommand{\Fpnmul}{\mathbb{F}_{p^n}^*}

\newcommand{\Fpnn}{\mathbb{F}_{p^n} \longrightarrow \mathbb{F}_{p^n}}

\begin{document}

    \title{Low $c$-Differential Uniformity of the Swapped Inverse Function in Odd Characteristic}
    \author{Jaeseong Jeong$^1$, Namhun Koo$^2$, Soonhak Kwon$^1$\\
        \small{\texttt{ Email: wotjd012321@naver.com, nhkoo@ewha.ac.kr, shkwon@skku.edu}}\\
\small{$^1$Applied Algebra and Optimization Research Center, Sungkyunkwan University, Suwon, Korea}\\
\small{$^2$Institute of Mathematical Sciences, Ewha Womans University, Seoul, Korea}\\
    }
\date{}

\maketitle
\begin{abstract}
The study of Boolean functions with low
$c$-differential uniformity has become recently an important topic
of research. However, in odd characteristic case, there are not
many results on the ($c$-)differential uniformity of functions
that are not power functions. In this paper, we investigate the
$c$-differential uniformity of the swapped inverse functions in
odd characteristic, and show that their $c$-differential
uniformities are at most 6 except for some special case.

\bigskip
\noindent \textbf{Keywords.} $c$-Differential Uniformity, Differential Uniformity, Permutation,

\bigskip
\noindent \textbf{Mathematics Subject Classification(2020)} 94A60, 06E30
\end{abstract}

\section{Introduction}

In this paper, we mainly focus on the finite fields in odd characteristic. Throughout this paper, let 
\begin{itemize} 
\item $\Fpn$ be the finite field of $p^n$ elements and $\Fpnmul$ and $\Fpnmul=\Fpn \setminus \{0\}$ be the multiplicative group,
where $p$ is an odd prime.
\item $Inv$ be the multiplicative inverse function on $\Fpn$ such that $Inv(x)=x^{-1}$ for all $x\in\Fpnmul$ and
$Inv(0)=0^{-1}=0$.
\end{itemize}

Recently, Ellinsen et. al.\cite{EFR+20} introduced a new concept $c$-differential uniformity, which is useful to estimate resistance against some variants of differential attack\cite{BCJW02}. 

\begin{defi}(\cite{EFR+20}) Let $F : \Fpnn$ be a function and $c\in\Fpn$.\\
(i) We denote the $c$-differential of $F$ by $_cD_a F(x)=F(x+a)-cF(x)$.\\
(ii) Let $a,b\in\Fpn$. We denote $_c\Delta_F(a,b)$ by the number of solutions in $\Fpn$ of $_cD_a F(x)=b$.\\
(iii) The $c$-differential uniformity of $F$ is defined by $_c \Delta_F = \max\{ _c\Delta_F(a,b) : a,b\in \Fpn\text{ and }a\ne 0 \text{ if } c=1 \}$.\\
(iv) $F$ is called a perfect $c$-nonlinear(PcN) function if $_c \Delta_F =1$.\\
(v) $F$ is called a almost perfect $c$-nonlinear(APcN) function if $_c \Delta_F =2$.
\end{defi}

Functions with low $c$-differential uniformity provide good resistance to differential attacks of this type.  Therefore, finding functions with low $c$-differential uniformity has been a good research topic, and many new classes of functions with low $c$-differential uniformity were proposed. Please see \cite{BC21,BCRP21,HPR+21,JKK22b,MRS+21,Sta21b,Sta21d,TZJT21,WZ21,WLZ21,Yan21,ZH21} for some new progress.

In even characteristic, it is known that $Inv\circ (0,1)$ has good cryptographic properties, for example, low differential uniformity\cite{LWY13e}, high nonlinearity\cite{LWY13e}, low boomerang uniformity\cite{LQSL19}, and low differential-linear uniformity\cite{JKK22a}. Very recently, the $c$-differential uniformity and the $c$-boomerang uniformity are investigated in \cite{Sta21b}. 

However, the $c$-differential uniformity of $Inv\circ (0,1)$ in odd characteristic has not been investigated, and the author of \cite{Sta21b} suggested that it would be a good research topic. In this paper, we investigate the $c$-differential uniformity of  the permutation obtained by swapping two points from $Inv$, which can be expressed as $Inv \circ
(\alpha,\beta)$ with $\alpha\ne \beta\in\Fpn$. In \cite{JKK22b}, we show that the $c$-differential uniformity is not preserved under affine equivalence but is preserved under some specific affine equivalence in even characteristic. The next proposition shows that the $c$-differential uniformity is also preserved under this specific affine equivalence in odd characteristic.

\begin{prop}\label{cdu_aff_prop} We call an affine permutation of the form $A(x)=a_1x+a_0$ where $a_0\in \Fpn$ and $a_1\in\Fpnmul$ an affine permutation of degree one. We call two functions $F$ and $F'$ are affine equivalent of degree one if $F=A_1 \circ F' \circ A_2$ where $A_1$ and $A_2$ are affine permutations of degree one. If two functions $F$ and $F'$ defined on $\Fpn$ are affine equivalent of degree one, then $_c\Delta_F={}_c\Delta_{F'}$.
\end{prop}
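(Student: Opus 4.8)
The plan is to show that $c$-differential uniformity is invariant under affine equivalence of degree one by directly tracking how the substitution $F = A_1 \circ F' \circ A_2$ transforms the defining equation ${}_cD_a F(x) = b$ into a corresponding equation for $F'$, and then exhibiting a bijection between the solution sets. Writing $A_1(x) = a_1 x + a_0$ and $A_2(x) = a_2 x + a_2'$ with $a_1, a_2 \in \Fpnmul$ and $a_0, a_2' \in \Fpn$, I would first expand
\begin{equation*}
{}_cD_a F(x) = F(x+a) - cF(x) = A_1(F'(A_2(x+a))) - c\,A_1(F'(A_2(x))).
\end{equation*}
The crucial point is that $A_2(x+a) = A_2(x) + a_2 a$, so the inner argument is shifted by the \emph{fixed} amount $a' := a_2 a$; this is exactly why degree-one affine maps (with $a_0' = a_2'$ absorbed into the shift) preserve the structure, whereas a general affine map mixing in a Frobenius power would not.

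Next I would substitute $y = A_2(x)$, which is a bijection on $\Fpn$, and use linearity of $A_1$ to pull the constant $a_0$ and multiplier $a_1$ out front. The key calculation is that
\begin{equation*}
A_1(u) - c\,A_1(v) = a_1(u - cv) + (1-c)a_0,
\end{equation*}
so with $u = F'(y + a')$ and $v = F'(y)$ one obtains ${}_cD_a F(x) = a_1\bigl({}_cD_{a'}F'(y)\bigr) + (1-c)a_0$. Setting this equal to $b$ and solving, the equation ${}_cD_a F(x) = b$ becomes ${}_cD_{a'}F'(y) = b'$ where $b' := a_1^{-1}(b - (1-c)a_0)$, and since $y = A_2(x)$ ranges bijectively over $\Fpn$ as $x$ does, the number of solutions is preserved: ${}_c\Delta_F(a,b) = {}_c\Delta_{F'}(a',b')$.

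Finally, I would take the maximum over all $(a,b)$. The map $(a,b) \mapsto (a',b') = (a_2 a,\, a_1^{-1}(b-(1-c)a_0))$ is a bijection of $\Fpn \times \Fpn$ onto itself, so the two maxima coincide; I must only check that the side condition ``$a \ne 0$ if $c = 1$'' is respected, which holds because $a' = a_2 a$ vanishes precisely when $a$ does (as $a_2 \ne 0$). Hence $\maxb$ over the admissible index set is unchanged and ${}_c\Delta_F = {}_c\Delta_{F'}$.

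I do not anticipate a serious obstacle here, as the argument is a bookkeeping computation; the only genuine subtlety—and the step most prone to error—is verifying that the constant term $(1-c)a_0$ coming from the outer map $A_1$ is correctly absorbed into the redefinition of $b$ and does not interact with the differential structure, together with confirming the exceptional case $c=1$ (where the $(1-c)a_0$ term vanishes and the constraint $a \ne 0$ must be carried through the bijection) is handled consistently.
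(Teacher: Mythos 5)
Your proposal is correct and follows essentially the same route as the paper: a direct expansion of ${}_cD_aF(x)=b$ showing it transforms into ${}_cD_{a'}F'(y)=b'$ under a bijective change of variables, followed by the observation that $(a,b)\mapsto(a_2a,\,a_1^{-1}(b+a_0(c-1)))$ is a bijection respecting the constraint $a\ne 0$ when $c=1$. The only difference is one of bookkeeping: the paper splits the argument into the inner composition $F'\circ A_2$ (deferred to a cited result of Hasan et al.\ and its extension to odd characteristic) and the outer composition $A_1\circ F''$ (proved explicitly), whereas you treat both in a single self-contained computation, which is if anything slightly more complete.
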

\begin{proof}
We write $F=A_1 \circ F' \circ A_2$ and $F''=F'\circ A_2$, where $A_1$ and $A_2$ are affine permutations of degree one. $F'$ and $F''$ have the same $c$-differential uniformity in even characteristic\cite{HPR+21}, and the proof in \cite{HPR+21} can be naturally extended for odd characteristic case. It remains to show that $F=A_1 \circ F''$ and $F''$ has the same $c$-differential uniformity. We denote $A_1(x)=a_1x+a_0$ where $a_1\in\Fpnmul$ and $a_0\in\Fpn$. Let $a,b\in\Fpn$. Then $_cD_aF(x)=b$ implies that
\begin{align*}
b&=F(x+a)-cF(x)=(A_1 \circ F'')(x+a)-c(A_1 \circ F'')(x)=A_1 (F''(x+a))-cA_1(F''(x))\\
&=a_1 F''(x+a)+a_0-c(a_1F''(x)+a_0)=a_1(F''(x+a)-cF''(x))+a_0(1-c).
\end{align*}
So we have $_cD_aF''(x)=a_1^{-1}\left(b+a_0(c-1)\right)$ and hence $_c\Delta_F(a,b)={}_c\Delta_{F''}\left(a,a_1^{-1}\left(b+a_0(c-1)\right)\right)$. Since the map from $b$ to $a_1^{-1}\left(b+a_0(c-1)\right)$ is bijective for all $c\in\Fpn$, we obtain $_c\Delta_F={}_c\Delta_{F''}$.
\end{proof}

If $F'=Inv \circ (\alpha,\beta)$, $A(x)=\alpha x$ where $\alpha\in\Fpnmul$ and $\beta\in\Fpn$, then $F=Inv \circ (1,\alpha^{-1}\beta)=A \circ F' \circ A$ and $F'$ have the same $c$-differential uniformity for all $c\in\Fpn$ by Proposition \ref{cdu_aff_prop}. Therefore, in this paper, we investigate the $c$-differential uniformity of $Inv\circ (1,\gamma)$ where $\gamma\in\Fpn\setminus\{1\}$.

For even characteristic case, there are many
non-power functions with low differential uniformity (see
\cite{JKK22b} and references therein). The situation for odd
characteristic case is quite different. That is, though there are
many functions with low differential uniformity (see
\cite{CM97,DMM+03,HRS99,HS97,JZL+13,LRF21,Pot16,ZH21,ZW10,ZW11}
and the references therein), most of them are power functions.
Also, to the best of our knowledge, there is no published article
on the differential uniformity of $Inv\circ (1,\gamma)$ in odd
characteristic. Therefore, we will also study the special case
$c=1$, i.e., the differential uniformity of $Inv\circ
(1,\gamma)$.

The rest of this paper is organized as follows. In section 2, we introduce some preliminaries which are necessary
in our subsequent sections. In section 3, we investigate the $c$-differential uniformity of $Inv\circ (0,1)$ in odd characteristic,  i.e., the case $\gamma=0$. In section 4, we investigate the $c$-differential uniformity of $Inv\circ (1,\gamma)$ with $\gamma\ne 0,1$. In section 5, we give a concluding remark.

\section{Some Previous Results}\label{sec_pre}

In this section, we recall some previous results. First we briefly mention the $c$-differential uniformity of $Inv$ and $Inv\circ (0,1)$ in even characteristic. 
It is known \cite{EFR+20} that $Inv$ is APcN if and only if $\tr(c)=\tr\left(\frac{1}{c}\right)=1$, otherwise $_c\Delta_{Inv}=3$, where $p=2$ with $c\ne 0,1$ and $\tr$ is the field trace. It is also known \cite{Sta21b} that $_c\Delta_{Inv\circ(0,1)}=4$ when $n\ge 4$. Hence we can see that the swapped inverse function has slightly higher $c$-differential uniformity than the inverse function when $p=2$.

For odd characteristic case, if $c\ne 0,1$, then $Inv$ is APcN if and only if $c\in \{4, 4^{-1}\}$ or  both $c^2-4c$ and $1-4c$ are not squares in $\Fpn$, otherwise, $_c\Delta_{Inv}=3$ (see \cite{EFR+20}). The differential uniformity of $Inv$ is as follows\cite{DMM+03,HRS99} (as in usual notation, we use the notation $D_aF(x)={}_1D_aF(x)$ and $\Delta_F(a,b)={}_1\Delta_F(a,b)$, throughout this paper).
\begin{equation*}
\Delta_{Inv} = 
\begin{cases}
2 &\text{ if }p^n\equiv 2\pmod{3},\\
3 &\text{ if }p=3,\\
4 &\text{ if }p^n\equiv 1\pmod{3}.
\end{cases}
\end{equation*}
To summarize, in odd characteristic, one has $2\le \Delta_{Inv}\le 4$ and $2\le {}_c\Delta_{Inv}\le 3$ where $c\ne 0,1$. We will later show that the swapped inverse function has slightly higher ($c$-)differential uniformities than the inverse function in odd characteristic. For the details, see section \ref{sec_inv01} and section \ref{sec_inv1r}.

The quadratic character in $\Fpn$ is defined as follows 
\begin{equation*}
\chi(x) = 
\begin{cases}
1 & \text{ if }x\in\Fpnmul\text{ is a square,} \\
-1 & \text{ if }x\in\Fpnmul\text{ is a nonsquare,}\\
0 & \text{ if }x=0.
\end{cases}
\end{equation*}
It is known that $\chi(x)=x^{\frac{p^n-1}{2}}$. The followings are also well known and useful for our results.
\begin{equation*}
\begin{array}{ll}
\chi(-1)=
\begin{cases}
1 & \text{ if }p^n\equiv 1 \pmod{4}, \\
-1 & \text{ if }p^n\equiv 3 \pmod{4},
\end{cases}
\ & \ 
\chi(5)=
\begin{cases}
1 & \text{ if }p^n\equiv \pm1 \pmod{5}, \\
-1 & \text{ if }p^n\equiv \pm2 \pmod{5},
\end{cases}
\\ & \\
\chi(-3)=
\begin{cases}
1 & \text{ if }p^n\equiv 1 \pmod{3}, \\
-1 & \text{ if }p^n\equiv 2 \pmod{3},
\end{cases} 
\  & \ \chi(2)=
\begin{cases}
1 & \text{ if }p^n\equiv \pm1 \pmod{8}, \\
-1 & \text{ if }p^n\equiv \pm3 \pmod{8}.
\end{cases}
\end{array}
\end{equation*}
We also denote $\sqrt{x}$ by any square root of $x$ in $\Fpn$ if $\chi(x)=1$ throughout this paper. Next we introduce a well-known lemma about the number of solutions of quadratic equations. 

\begin{lemma}\label{quad_lemma_odd}
Let $a_2\in\Fpnmul$ and $a_1,a_0\in\Fpn$ and $D=a_1^2-4a_0a_2$. Then,
\begin{equation*}
\#\{x\in\Fpn : a_2x^2+a_1x+a_0=0\}=
\begin{cases}
2 &\text{ if }\chi(D)=1,\\
1 &\text{ if }\chi(D)=0,\\
0 &\text{ if }\chi(D)=-1.\\
\end{cases}
\end{equation*}
\end{lemma}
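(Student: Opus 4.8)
The plan is to reduce the count directly to the number of square roots of the discriminant $D$ via completing the square, using crucially that the characteristic is odd so that $2$ (and hence $2a_2$) is invertible in $\Fpn$. Concretely, I would first multiply the equation $a_2x^2+a_1x+a_0=0$ by $4a_2$, which is a nonzero element and therefore does not change the solution set. This yields
\begin{equation*}
4a_2^2x^2+4a_1a_2x+4a_0a_2=0, \quad\text{i.e.,}\quad (2a_2x+a_1)^2=a_1^2-4a_0a_2=D.
\end{equation*}
Since $a_2\in\Fpnmul$ and $p$ is odd, the affine map $x\mapsto 2a_2x+a_1$ is a bijection on $\Fpn$. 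Hence the number of $x\in\Fpn$ solving the original quadratic equals the number of $y\in\Fpn$ with $y^2=D$, and it suffices to count square roots of $D$.

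Next I would carry out the three-way case analysis according to the value of $\chi(D)$. If $\chi(D)=0$, then $D=0$ and $y^2=0$ forces $y=0$, giving exactly one solution. If $\chi(D)=1$, then $D$ is a nonzero square with two square roots $\pm\sqrt{D}$; these are distinct precisely because $p$ is odd (so $\sqrt{D}\ne-\sqrt{D}$ as $2\sqrt{D}\ne 0$), giving exactly two solutions. If $\chi(D)=-1$, then $D$ is a nonsquare and $y^2=D$ has no solution in $\Fpn$. Translating back through the bijection $y=2a_2x+a_1$ gives the stated counts of $2$, $1$, and $0$ respectively.

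There is no genuine obstacle here, as this is a classical completing-the-square argument; the only point requiring care is that every reduction step relies on odd characteristic. Specifically, multiplication by $4a_2$ is invertible only because $2\ne 0$ and $a_2\ne 0$, and the two roots $\pm\sqrt{D}$ collapse to one exactly when $p=2$, which is excluded. I would therefore state at the outset that $p$ odd is used throughout, so that the reader sees why the trichotomy in terms of the quadratic character $\chi$ is exactly the right bookkeeping for the square-root count.
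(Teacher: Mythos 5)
Your proof is correct: the completing-the-square reduction to counting square roots of $D$, together with the observation that every step uses the invertibility of $2a_2$ in odd characteristic, is the standard argument for this fact. The paper itself states this lemma as well known and offers no proof, so there is nothing to compare against; your write-up simply supplies the expected classical justification.
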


The following lemma is a generalization of Lemma 2.6 in \cite{JKK22b} to the odd characteristic case. This lemma is very simple and useful to our results.

\begin{lemma}\label{cdu_symm_lemma} Let $c\ne 0$. Then $_c\Delta_F(a,b)={} _{c^{-1}}\Delta_F(-a,-bc^{-1})$ for all $a\in\Fpn$ and $b\in\Fpn$, and hence $_c\Delta_F={}_{c^{-1}}\Delta_F$.
\end{lemma}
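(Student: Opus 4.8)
The plan is to exhibit an explicit bijection between the two solution sets via the change of variable $y = x+a$. Starting from the defining equation $_cD_aF(x) = F(x+a) - cF(x) = b$, I would substitute $y = x+a$, which is a bijection of $\Fpn$ sending each solution $x$ of $_cD_aF(x)=b$ to a corresponding $y$. Rewriting the equation in terms of $y$ (so that $x = y-a$) gives $F(y) - cF(y-a) = b$.

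Multiplying through by $-c^{-1}$, which is legitimate since $c\neq 0$, yields $F(y-a) - c^{-1}F(y) = -bc^{-1}$, which is precisely $_{c^{-1}}D_{-a}F(y) = -bc^{-1}$. Thus the translation $x \mapsto x+a$ maps the solution set of $_cD_aF(x)=b$ bijectively onto the solution set of $_{c^{-1}}D_{-a}F(y) = -bc^{-1}$, and comparing cardinalities gives the pointwise identity $_c\Delta_F(a,b) = {}_{c^{-1}}\Delta_F(-a,-bc^{-1})$.

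For the uniformity statement, I would observe that the reparametrization $(a,b) \mapsto (-a,-bc^{-1})$ is a bijection of $\Fpn \times \Fpn$. Taking the maximum of the pointwise identity over all admissible index pairs then forces $_c\Delta_F = {}_{c^{-1}}\Delta_F$.

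The only point requiring care, and the closest thing to an obstacle in an otherwise routine substitution argument, is the side condition $a\neq 0$ that is built into the definition of the uniformity when $c=1$. Since $c=1$ if and only if $c^{-1}=1$, and since the map $a \mapsto -a$ preserves the condition $a \neq 0$, the bijection on index pairs respects this constraint simultaneously on both sides; hence the two maxima are taken over matching sets of pairs and must coincide.
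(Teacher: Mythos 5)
Your proposal is correct and is essentially the paper's own argument: the paper likewise observes that $x$ solves $_cD_aF(x)=b$ if and only if $x+a$ solves $_{c^{-1}}D_{-a}F(y)=-bc^{-1}$, and then passes to the maximum. You merely spell out the substitution and the $a\neq 0$ side condition more explicitly than the paper does, which is fine.
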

\begin{proof}
Let $a\in\Fpn$ and $b\in\Fpn$. Then, it is clear that $x$ is a solution of $_c D_aF(x)=b$ if and only if $x+a$ is a solution of $_{c^{-1}}D_{-a}(x)=-bc^{-1}$. Hence we obtain $_c\Delta_F(a,b)={} _{c^{-1}}\Delta_F(-a,-bc^{-1})$. Therefore, we get $_c\Delta_F={} _{c^{-1}}\Delta_F$.
\end{proof}

\section{The $c$-differential uniformity of $Inv\circ (0,1)$}\label{sec_inv01}

Let $F=Inv \circ (0,1)$. Since $F$ is permutation, we have $_c\Delta_F(0,b)=1$ for all $b\in\Fpn$ and $c\in\Fpn\setminus\{1\}$, and we also have $_0\Delta_F(a,b)=1$ for all $a,b\in\Fpn$. Hence, throughout this section we assume that
\begin{equation}\label{inv01}
F=Inv\circ (0,1),\ a\ne 0,\ c\ne 0.
\end{equation}
We also denote throughout this section that
$$P=\{x\in\Fpn : F(x)\ne Inv(x)\}=\{0,1\},\ \ P_a=P\cup\{x-a : x\in P\}=\{0,1,-a,1-a\}.$$
It is easy to see that 
$$_c\Delta_F(a,b)=\#\{x\in P_a : {}_cD_aF(x)=b\}+\#\{x\in \Fpn \setminus P_a : {}_c D_aF(x)=b\}.$$
The following lemma characterizes the case $\#\{x\in \Fpn \setminus P_a : D_aF(x)=b\}=2$.

\begin{lemma}\label{panot2_lemma} Under the same assumption as in \eqref{inv01} let $a\in\Fpnmul$ and $b\in\Fpn$. Then $_c D_aF(x)=b$ has at most two solutions in $\Fpn\setminus P_a$. Furthermore, $_c D_aF(x)=b$ has two solutions in $\Fpn\setminus P_a$ if and only if $(ab+c-1)^2-4abc$ is a nonzero square and $b\ne 0$, $(b+c)a+b+c-1\ne0$, $(1-b)a+b+c-1\ne0$.
\end{lemma}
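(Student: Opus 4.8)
The plan is to analyze the equation $_cD_aF(x)=b$ restricted to $x\in\Fpn\setminus P_a$. On this set, both $x$ and $x+a$ avoid the swapped points $\{0,1\}$, so $F$ agrees with $Inv$ at both arguments; that is, $_cD_aF(x)=(x+a)^{-1}-cx^{-1}=b$. First I would clear denominators: multiplying through by $x(x+a)$ (legitimate since $x\ne 0$ and $x+a\ne 0$ on $\Fpn\setminus P_a$) yields $x - c(x+a) = bx(x+a)$, which rearranges to a quadratic in $x$. Collecting terms gives something of the form
\begin{equation*}
bx^2+\big((b)a+c-1\big)x+ca=0
\end{equation*}
up to sign conventions; I would carefully expand $x-cx-ca=bx^2+abx$ to get $bx^2+(ab+c-1)x+ca=0$. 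The leading coefficient is $b$, so the structure of the solution set depends on whether $b=0$.

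The immediate consequence is the "at most two solutions" claim: when $b\ne 0$ this is a genuine quadratic with at most two roots, and when $b=0$ it is linear (or degenerate) with at most one root, so in all cases $_cD_aF(x)=b$ has at most two solutions in $\Fpn\setminus P_a$. For the characterization of exactly two solutions, I would invoke Lemma \ref{quad_lemma_odd}: two solutions requires $b\ne 0$ together with the discriminant $D=(ab+c-1)^2-4abc$ being a nonzero square, i.e. $\chi(D)=1$. Computing $D=(ab+c-1)^2-4b\cdot ca=(ab+c-1)^2-4abc$ matches the stated expression $(ab+c-1)^2-4abc$, so the square condition is exactly $\chi\big((ab+c-1)^2-4abc\big)=1$.

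The remaining—and genuinely fiddly—part is the three exclusion conditions $b\ne 0$, $(b+c)a+b+c-1\ne 0$, $(1-b)a+b+c-1\ne 0$. These arise because Lemma \ref{quad_lemma_odd} counts \emph{all} roots of the quadratic in $\Fpn$, whereas we only want roots lying in $\Fpn\setminus P_a=\Fpn\setminus\{0,1,-a,1-a\}$. So I would substitute each of the four excluded points into the quadratic $bx^2+(ab+c-1)x+ca=0$ and check when it is satisfied; any excluded point that happens to be a root must be discarded from the count. Substituting $x=0$ gives $ca=0$, impossible since $a,c\ne 0$, so $0$ is never a spurious root; substituting $x=-a$ gives $ba^2-(ab+c-1)a+ca=0$, which simplifies to $a(1-c)+ca=a\ne 0$—wait, I would recompute this directly and expect it to also be automatically excluded or to contribute nothing. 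The two surviving conditions come from $x=1$ and $x=1-a$: plugging $x=1$ yields $b+(ab+c-1)+ca=ab+b+c-1+ca=(b+c)a+b+c-1$, giving exactly the condition $(b+c)a+b+c-1\ne 0$; plugging $x=1-a$ and simplifying should yield $(1-b)a+b+c-1\ne 0$.

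The main obstacle will be the bookkeeping in this last step: one must verify that each of the three stated inequalities corresponds precisely to "a particular point of $P_a$ is \emph{not} a root of the quadratic," and simultaneously confirm that the other excluded points ($x=0$ and $x=-a$) either never satisfy the quadratic or coincide with the already-handled cases under the standing assumptions $a\ne 0$, $c\ne 0$. The delicate issue is ensuring the equivalence is exact in both directions—that when $\chi(D)=1$ and all three inequalities hold, \emph{both} quadratic roots genuinely lie outside $P_a$, with no further coincidences (for instance two excluded points colliding, or a double root landing in $P_a$). I would handle this by a clean case check that the four points of $P_a$ are distinct (using $a\ne 0$ and $a\ne 1$, the latter guaranteed when relevant) and that the three conditions together rule out exactly the bad coincidences, so that the two roots counted by Lemma \ref{quad_lemma_odd} survive into $\Fpn\setminus P_a$.
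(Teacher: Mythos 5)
Your proposal is correct and follows essentially the same route as the paper: reduce to the quadratic $bx^2+(ab+c-1)x+ac=0$ on $\Fpn\setminus P_a$, apply Lemma \ref{quad_lemma_odd} for the two-root criterion, and substitute the four points of $P_a$ to obtain the exclusion conditions (your computation at $x=-a$ indeed simplifies to $a\ne 0$, so it is automatically harmless, just as the paper finds). The only remaining work is the routine verification you already outline, so no gap.
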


\begin{proof} If $x\in\Fpn\setminus P_a$ then $b={}_cD_aF(x)={}_cD_a Inv(x)$ implies that $bx^2+(ab+c-1)x+ac=0$ which has at most two solutions. Hence $_c D_aF(x)=b$ has at most two solutions in $\Fpn\setminus P_a$.\\
$_cD_aF(x)=b$ has two solutions in $\Fpn\setminus P_a$ implies that $bx^2+(ab+c-1)x+ac=0$ has two solutions which is equivalent to $(ab+c-1)^2-4abc$ is a nonzero square and $b\ne 0$ by Lemma \ref{quad_lemma_odd}. If there is a solution $x_0\in P_a$ of $bx^2+(ab+c-1)x+ac=0$ then $x_0$ cannot be a solution of $_cD_aF(x)=b$. So we require that $bx^2+(ab+c-1)x+ac=0$ has no solutions in $P_a$. We check this by substituting every element in $P_a$ to $bx^2+(ab+c-1)x+ac=0$. 
\begin{itemize}
\item If $x=0$ or $x=-a$ then we have $a=0$ which is a contradiction to $a\ne 0$.
\item If $x=1$ then we have $(b+c)a+b+c-1=0$.
\item If $x=1-a$ then we have $(1-b)a+b+c-1=0$.
\end{itemize}
Thus $bx^2+(ab+c-1)x+ac=0$ has no solutions in $P_a$ if and only if $(b+c)a+b+c-1\ne0$ and $(1-b)a+b+c-1\ne0$.\\
If all conditions in this theorem are satisfied, then $bx^2+(ab+c-1)x+ac=0$ has two solutions in $\Fpn\setminus P_a$. Since all solutions are not contained in $P_a$, they are also solutions of $_cD_aF(x)=b$.
\end{proof}

In the following lemma, we provide a lower bound on $_c\Delta_F$ where $c\ne 0$. Unfortunately, the following lemma shows that $F$ is not APcN for all $c\ne 0$, for every odd prime $p$ and $n>0$.

\begin{lemma}\label{cdu_bound_lemma} Under the same assumption as in \eqref{inv01}, if $p^n>5$ then $_c\Delta_F \ge 3$.
\end{lemma}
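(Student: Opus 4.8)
The plan is to exhibit a single pair $(a,b)$ for which $_cD_aF(x)=b$ has at least three solutions; since $_c\Delta_F$ is the maximum over all admissible $(a,b)$, this suffices. The idea is to force one solution to sit at the swapped point $x=0$ and to produce two further solutions from the ``inverse part'' of $F$ by means of Lemma \ref{panot2_lemma}.

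Concretely, I would fix $a\in\Fpnmul$ with $a\neq 1$ and set $b:={}_cD_aF(0)=F(a)-cF(0)=a^{-1}-c$, using $F(0)=1$ and $F(a)=a^{-1}$. By construction $x=0$ is then a solution, so it remains to arrange two further solutions in $\Fpn\setminus P_a$. Substituting $b=a^{-1}-c$ into the conditions of Lemma \ref{panot2_lemma}, the coefficient $ab+c-1$ collapses to $c(1-a)$, the discriminant becomes $(ab+c-1)^2-4abc=c\big(c(1+a)^2-4\big)=:D$, the quantity $(b+c)a+b+c-1$ equals $a^{-1}$ and is therefore automatically nonzero, the requirement $b\neq 0$ reads $a\neq c^{-1}$, and the last condition $(1-b)a+b+c-1\neq 0$ becomes $(1+c)a^{2}-2a+1\neq 0$, excluding at most two values of $a$. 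One also notes, exactly as in the proof of Lemma \ref{panot2_lemma}, that $x=0$ and $x=-a$ are never roots of the underlying quadratic, so the two roots produced are genuinely distinct from the pinned solution $x=0$. Hence $_c\Delta_F(a,b)\geq 3$ as soon as $D$ is a nonzero square for some admissible $a$.

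The whole problem is thereby reduced to a quadratic-character question: does there exist $a\in\Fpnmul\setminus\{1,c^{-1}\}$, avoiding the at most two roots of $(1+c)a^{2}-2a+1$, with $\chi(D)=1$? Writing $z=c(1+a)$ turns this into finding $z$ with $\chi(z^{2}-4c)=1$, and the standard evaluation $\sum_{z}\chi(z^{2}-4c)=-1$ shows that exactly $\tfrac12\big(p^{n}-2-\chi(c)\big)$ values of $z$ work, each giving a distinct candidate $a$. For $p^{n}$ not too small this count exceeds the handful of excluded values, so an admissible $a$ exists.

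I expect the delicate point to be making this last step sharp, i.e.\ pushing the threshold all the way down to $p^{n}>5$ \emph{uniformly} in $c\neq 0$: the crude count above secures existence only for moderately large $p^{n}$, so the smallest fields would have to be disposed of by hand, and a genuinely bad $c$ (one forcing $D$ to be a nonsquare for every admissible $a$ in the $x=0$ pin) would require falling back on a different pinned point $x\in\{1,-a,1-a\}$, each yielding its own discriminant, or on a coincidence of two special values such as ${}_cD_aF(0)={}_cD_aF(-a)$, which occurs at $a=(1-c)/(1+c)$. The symmetry ${}_c\Delta_F={}_{c^{-1}}\Delta_F$ of Lemma \ref{cdu_symm_lemma} should be used to halve this casework.
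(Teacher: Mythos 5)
Your reduction is exactly the paper's: you pin $x=0$ by taking $b=a^{-1}-c$, and the algebra you report matches the paper's line for line, namely $(ab+c-1)^2-4abc=c^2(a+1)^2-4c$, the automatic nonvanishing of $(b+c)a+b+c-1=a^{-1}$, and the single residual constraint $(c+1)a^2-2a+1\ne 0$. (In fact you are slightly more careful than the paper here: you track the hypothesis $b\ne 0$ of Lemma \ref{panot2_lemma}, i.e.\ $a\ne c^{-1}$, which the paper's proof does not mention.) Where you diverge is the existence step. The paper produces an admissible $a$ by an explicit case analysis on which of $-c$, $c-4$, $\pm 2c$, $\pm 3$ are squares, with separate ad hoc treatments for $c=-4$, for $p\in\{3,5,7\}$ (via a pigeonhole argument producing two consecutive squares), and a final appeal to Lemma \ref{cdu_symm_lemma} to rescue $p^n=7$. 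You instead substitute $z=c(a+1)$ and count: $\sum_z\chi(z^2-4c)=-1$ gives $\tfrac12\bigl(p^n-2-\chi(c)\bigr)$ good values of $z$, each a distinct $a$, against at most four or five excluded values. This is cleaner, uniform in $c$, and rules out your hypothetical ``genuinely bad $c$'' outright once $p^n$ is large.

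The genuine gap is the one you flag yourself: the count only beats the exclusions once $p^n\ge 13$ or so (with your extra exclusion $a\ne 1$, which is unnecessary, you need $p^n\ge 17$ in the worst case $\chi(c)=1$), so the fields $p^n\in\{7,9,11\}$, and the borderline $p^n=13$, are not covered, and no fallback is actually executed for them. Since the lemma asserts the bound for all $p^n>5$, the proof as written is incomplete precisely on the range where the paper invests almost all of its effort. The residue is a finite verification (finitely many pairs $(c,a)$ in each of three or four small fields), so the gap is mechanically fillable, but it must be filled: your suggested fallbacks (pinning a different point of $P_a$, or exploiting the coincidence ${}_cD_aF(0)={}_cD_aF(-a)$ at $a=\frac{1-c}{1+c}$, halved by Lemma \ref{cdu_symm_lemma}) are plausible and are indeed the flavor of what the paper does for $p^n=7$, but they remain unexecuted sketches in your write-up.
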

\begin{proof} Let $b=a^{-1}-c={}_cD_aF(0)$. Then $_c D_aF(x)=b$ has at least one solution $x=0$ in $P_a$. We investigate required conditions that $_cD_aF(x)=b$ has two solutions in $\Fpn\setminus P_a$ using Lemma \ref{panot2_lemma}. We require that $(ab+c-1)^2-4abc=(c-ca)^2-4c(1-ca)=c^2a^2+2c^2a+c^2-4c=c^2(a+1)^2-4c$ is a nonzero square.  And we have $(b+c)a+b+c-1=a^{-1}\cdot a+a^{-1}-1=a^{-1}$ and $(1-b)a+b+c-1=(c+1-a^{-1})\cdot a+a^{-1}-1=a^{-1}((c+1)a^2-2a+1)$. So we require $a\ne0$ and $(c+1)a^2-2a+1\ne 0$. Since $(c+1)a^2-2a+1= 0$ has no solutions if $(-2)^2-4(c+1)=-4c$ is not a square by Lemma \ref{quad_lemma_odd}, it is enough to check $-c$ is not a square.

\bigskip
\noindent (i) We assume that $-c$ is a square.
\begin{itemize}
\item If $c\ne -4$, then we set $a=-1$. Then  $c^2(a+1)^2-4c=-4c$ is a nonzero square and  $(c+1)a^2-2a+1=c+4\ne0$.
\item If $c=-4$ and $p\ne 3,5,7$, then we choose $a=-\frac{1}{4}=c^{-1}$. Then $c^2(a+1)^2-4c=25$ is a nonzero square and $(c+1)a^2-2a+1=-\frac{3}{16}-\frac{1}{2}+1=\frac{21}{16}\ne 0$.
\item If $c=-4$ and $p=3,5,7$, then $c^2(a+1)^2-4c=16(a+1)^2+16=(a+1)^2+1$ when $p=3,5$ and $c^2(a+1)^2-4c=2\left((a+1)^2+1\right)$ when $p=7$. Since $2$ is a square for every $n$ when $p=7$, it is enough to show that there is $\alpha_0\in\Fpn\setminus\Fp$ such that both $\alpha_0$ and $\alpha_0+1$ are squares. We denote $\Fpn=\Fp[X]/\langle f(X)\rangle$ for some $f(X)$ and $f(g)=0$ for some $g\in \Fpn$. Then $\Fpn$ is partitioned by subsets of the form $\{x+\alpha : \alpha \in \Fp\}$ where $x\in \Fpn\cdot g$. If all such subsets have at most $\frac{p-1}{2}$ squares then there are at most $\frac{p-1}{2p}(p^{n}-p)$ which is a contradiction to the fact that there are $\frac{1}{2}(p^n-p)$ squares in $\Fpn\setminus\Fp$. Hence there is $\{x+\alpha : \alpha \in \Fp\}\ne \Fp$ containing $\frac{p+1}{2}$ squares. If there are at least $\frac{p+1}{2}$ squares in $\{x_0+\alpha : \alpha \in \Fp\}$ for some $x_0 \in \Fpn\cdot g$, then there is $\alpha_0\in \{x_0+\alpha : \alpha \in \Fp\}$ such that both $\alpha_0$ and $\alpha_0+1$ are squares. Then there is $a\in \Fpn$ such that $\alpha_0=(a+1)^2$, and then $(a+1)^2+1=\alpha_0+1$ is also a square. Since $\alpha_0\not \in\Fp$ and hence $a\not\in\Fp$. When $p=3$ we have $(c+1)a^2-2a+1=a+1\ne 0$. When $p=5$ we have $(c+1)a^2-2a+1=2(a+1)(a-2)\ne 0$. When $p=7$ we have $(c+1)a^2-2a+1=4(a+2)(a+1)\ne0$. 
\end{itemize}
(ii) We assume that $-c$ is not a square. As mentioned above, we only require to check that there is $a\in\Fpn$ such that $c^2(a+1)^2-4c$ is a nonzero square.
\begin{itemize}
\item If $-1$ is a square, then one of $c-4$ and $-c^2+4c=-c(c-4)$ is a square. 
\begin{itemize}
\item If $c-4$ is a square then $4(c-4)=4c-16$ is also a square. If we choose $a\in\Fpn$ with $c^2(a+1)^2=4c-16$ then $c^2(a+1)^2-4c=-16$ is a nonzero square.
\item If $-c^2+4c$ is a square then we choose $a\in\Fpn$ with $c^2(a+1)^2=-c^2+4c$. Then $c^2(a+1)^2-4c=-c^2$ is a nonzero square.
\end{itemize}
\item If $-1$ is not a square, then $c$ is a square. Then one of $2c$ and $-2c$ is a square.
\begin{itemize}
\item If $2c$ is a square then $8c$ is a square. If we choose $a\in\Fpn$ with $c^2(a+1)^2=8c$ then $c^2(a+1)^2-4c=4c$ is a nonzero square.
\item If $-2c$ is a square then $-8c$ is a square. In this case, $p=3$ or one of $3$ and $-3$ is a nonzero square.
\begin{itemize}
\item If $p=3$ and $n>1$ then $\Fpn$ is partitioned by $p^{n-1}$ subsets of the form $\{x+\alpha c : \alpha \in \Fp\}$. By similar argument with the case $c=-4$ and $p\in\{3,5,7\}$ above, there is a subset $\{x+\alpha c : \alpha \in \Fp\}\ne\{\alpha c : \alpha \in \Fp\}$ containing $2$ squares. Then there is $\alpha_0\in \{x_0+\alpha : \alpha \in \Fp\}$ such that both $\alpha_0$ and $\alpha_0+c$ are nonzero squares. We choose $a\in\Fpn$ such that $c^2(a+1)^2=\alpha_0+c$, then $c^2(a+1)^2-4c=\alpha_0$ is also a nonzero square.
\item If $3$ is a nonzero square then we choose $a\in\Fpn$ satisfying $c^2(a+1)^2=-2c$. Then $c^2(a+1)^2-4c=-6c$ is also a nonzero square.
\item If $-3$ is a nonzero square then $-12$ is also a square. If we choose $a$ satisfying $c^2(a+1)^2=-8c$ then $c^2(a+1)^2-4c=-12c$ is also a nonzero square.
\end{itemize}
\end{itemize}
\end{itemize}
Therefore, there is $a\in \Fpnmul$ such that $_cD_a(x)=a^{-1}-c$ has two solutions in $\Fbn\setminus P_a$, and hence $_c\Delta_F(a,a^{-1}-c)\ge3$ for all $p^n>7$. When $p^n=7$ we already show that there is $a\in\F_7$ such that $_c\Delta(a,a^{-1}-c)\ge 3$ for all $c\ne -4$. Since $-4^{-1}=5\ne -4$, we have $_5\Delta(a,a^{-1}-c)\ge3$ and hence $_{-4}\Delta\left(-a,c^{-1}(c-a^{-1})\right)={}_5\Delta(a,a^{-1}-c)\ge3$, by Lemma \ref{cdu_symm_lemma}. Therefore, we have $_c\Delta_F\ge3$ for all $p^n>5$, which completes the proof. 
\end{proof}

We obtain $\# P_a <4$ only when $a=\pm1$. In the following lemma we investigate $_c\Delta_F(a,b)$ when $a=\pm1$.

\begin{lemma}\label{cdu_a1_lemma} Under the same assumption as in \eqref{inv01} let $p^n>5$. Then $_c\Delta_F(1,b)\le 3$ for all $b\in\Fpn\setminus \{2^{-1}\}$ and $c\in\Fpnmul\setminus\{-2^{-1}\}$, and  $_c\Delta_F(-1,b)\le 3$ for all $b\in\Fpn\setminus \{1\}$ and $c\in\Fpnmul\setminus\{-2\}$. Furthermore, 
\begin{equation*}
_{-2^{-1}}\Delta_F(1,2^{-1})={}_{-2}\Delta_F(-1,1)=
\begin{cases}
3 &\text{ if }p^n\equiv \pm 3\pmod{8},\\
5 &\text{ if }p^n\equiv \pm 1\pmod{8}.
\end{cases}
\end{equation*}
\end{lemma}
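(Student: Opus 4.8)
The plan is to use the decomposition already fixed in this section, namely
${}_c\Delta_F(a,b)=\#\{x\in P_a : {}_cD_aF(x)=b\}+\#\{x\in \Fpn\setminus P_a : {}_cD_aF(x)=b\}$,
bounding the second term by $2$ via Lemma \ref{panot2_lemma} and controlling the first term directly, since $\#P_a=3$ exactly when $a=\pm1$.

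First I would treat $a=1$, where $P_1=\{0,1,-1\}$. Evaluating the $c$-differential on these three points (using $F(0)=1$, $F(1)=0$, $F(2)=2^{-1}$, $F(-1)=-1$) gives ${}_cD_1F(0)=-c$, ${}_cD_1F(1)=2^{-1}$, and ${}_cD_1F(-1)=1+c$. A quick check shows these three values are pairwise distinct exactly when $c\ne-2^{-1}$, and all collapse to the common value $2^{-1}$ when $c=-2^{-1}$. Hence for $c\ne-2^{-1}$ the first term is at most $1$; moreover excluding $b=2^{-1}$ discards the contribution of $x=1$. Combining with Lemma \ref{panot2_lemma} yields ${}_c\Delta_F(1,b)\le3$. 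The case $a=-1$ is entirely analogous: here $P_{-1}=\{0,1,2\}$ and ${}_cD_{-1}F(0)=-1-c$, ${}_cD_{-1}F(1)=1$, ${}_cD_{-1}F(2)=-2^{-1}c$, which are pairwise distinct exactly when $c\ne-2$ and all equal to $1$ when $c=-2$; excluding $b=1$ then gives ${}_c\Delta_F(-1,b)\le3$.

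For the ``Furthermore'' part I would first invoke Lemma \ref{cdu_symm_lemma} with $a=-1$, $b=1$, $c=-2$, which gives ${}_{-2}\Delta_F(-1,1)={}_{-2^{-1}}\Delta_F(1,2^{-1})$; thus it suffices to evaluate the latter. Taking $c=-2^{-1}$ and $b=2^{-1}$, all three points of $P_1$ now satisfy ${}_cD_1F(x)=2^{-1}=b$, contributing exactly $3$ solutions. The remaining solutions lie in $\Fpn\setminus P_1$ and are the roots of $bx^2+(ab+c-1)x+ac=0$ with $a=1$; substituting, its discriminant $(ab+c-1)^2-4abc$ reduces to $2$, and the three side conditions of Lemma \ref{panot2_lemma} are readily verified to hold. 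Hence by Lemma \ref{quad_lemma_odd} there are $2$ additional solutions when $\chi(2)=1$ and none when $\chi(2)=-1$, so the total is $5$ or $3$. Since $\chi(2)=1\iff p^n\equiv\pm1\pmod8$ and $\chi(2)=-1\iff p^n\equiv\pm3\pmod8$ by the character values recalled in Section \ref{sec_pre}, this matches the claimed dichotomy.

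The main obstacle is the special-case evaluation: one must verify that the discriminant collapses to exactly $2$ so that its being a nonzero square is governed by $\chi(2)$ and hence by $p^n\bmod8$, and that the side conditions of Lemma \ref{panot2_lemma} hold so that the two complement roots genuinely avoid $P_1=\{0,1,-1\}$ and add cleanly to the three solutions coming from $P_1$. The generic bound, in contrast, is routine bookkeeping once the distinctness of $\{-c,2^{-1},1+c\}$ (respectively $\{-1-c,1,-2^{-1}c\}$) for $c\ne-2^{-1}$ (respectively $c\ne-2$) is recorded.
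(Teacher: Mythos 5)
Your proposal is correct and follows essentially the same route as the paper: evaluate ${}_cD_1F$ on $P_1=\{0,1,-1\}$, observe the three values $-c$, $2^{-1}$, $1+c$ coincide exactly when $c=-2^{-1}$, bound the complement by Lemma \ref{panot2_lemma}, and in the exceptional case reduce the discriminant to $2$ so that $\chi(2)$ and hence $p^n\bmod 8$ decides between $3$ and $5$. The only cosmetic difference is that you verify the $a=-1$ case by direct computation of ${}_cD_{-1}F$ on $P_{-1}=\{0,1,2\}$, whereas the paper obtains it from the $a=1$ case via Lemma \ref{cdu_symm_lemma}; both are valid.
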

\begin{proof}
If $a=1$ then $P_a=\{0,1,-1\}$ and then we have $_cD_1F(0)=F(1)-cF(0)=-c$, $_cD_1F(1)=F(2)-cF(1)=2^{-1}$ and $_cD_1(-1)=F(0)-cF(-1)=1+c$.
\begin{itemize}
\item $_cD_1F(0)={}_cD_1F(1) \Leftrightarrow -c=2^{-1}$ : Then we have $c=-2^{-1}$. 
\item $_cD_1F(0)={}_cD_1F(-1) \Leftrightarrow -c=1+c$ : Then we have $c=-2^{-1}$.
\item $_cD_1F(1)={}_cD_1F(-1) \Leftrightarrow 2^{-1}=1+c$ : Then $2c+2=1$ and hence we have $c=-2^{-1}$.
\end{itemize}
So if $c=-2^{-1}$ and $b=2^{-1}$ then $_{c}D_1F(x)=b$ has three solutions in $P_a$. If $c\ne -2^{-1}$ or $b\ne 2^{-1}$ then $_cD_1F(x)=b$ has at most one solution, and hence we have $_c\Delta_F(1,b)\le 3$ for all $b\in\Fpn$, by Lemma \ref{panot2_lemma}.
Now assume that $c=-2^{-1}$ and $b=2^{-1}$. Then $_{-2^{-1}}D_1F(x)=2^{-1}$ has three solutions $x=0,\pm1$ in $P_a$. If $a=1$, $b=2^{-1}$ and $c=-2^{-1}$ in Lemma \ref{panot2_lemma} then
\begin{itemize}
\item$(ab+c-1)^2-4abc=(-1)^2-4\cdot\left(-\frac{1}{4}\right)=2$ is a square if and only if $p^n\equiv \pm 1\pmod{8}$. 
\item $(b+c)a+b+c-1=-1 \ne 0$ and $(1-b)a+b+c-1=-2^{-1} \ne 0$
\end{itemize}
Hence $_{-2^{-1}}D_1F(x)=2^{-1}$ has two solutions in $\Fpn\setminus P_a$ if and only if $p^n\equiv \pm 1\pmod{8}$, by Lemma \ref{panot2_lemma}.

By Lemma \ref{cdu_symm_lemma}, $_{c^{-1}}D_{-1}F(x)=-bc^{-1}$ and $_cD_1F(x)=b$ has the same number of solutions. Hence $_{-2}\Delta_F(-1,1)={}_{-2^{-1}}\Delta_F(1,2^{-1})$, and $_cD_{-1}F(x)=b$ has at most three solutions for all $b\in\Fpn\setminus \{1\}$ and $c\in\Fpnmul\setminus\{-2\}$.
\end{proof}

\begin{theorem}\label{cduf_thm} Under the same assumption as in \eqref{inv01} if $p^n>5$ then $3\le {}_c\Delta_F \le 5$. Moreover, $_c\Delta_F=5$ if and only if at least one of the followings are satisfied :\\
(i) $p^n\equiv \pm1\pmod{8}$ and $c\in\{-2,-2^{-1}\}$\\
(ii) $p^n\equiv 1\pmod{3}$ and one of the followings holds :
\begin{itemize}
\item $c\in\left\{\sqrt{-3}, \frac{1}{\sqrt{-3}}\right\}$, $\sqrt{-3}\ne5$, and $\frac{-3+5\sqrt{-3}}{6}$ is a nonzero square.
\item $c\in\left\{-\sqrt{-3}, -\frac{1}{\sqrt{-3}}\right\}$, $\sqrt{-3}\ne-5$, and $\frac{-3-5\sqrt{-3}}{6}$ is a nonzero square.
\end{itemize}
(iii) $p^n\equiv \pm1\pmod{5}$ and one of the followings holds :
\begin{itemize}
\item $c^2+4c-1=0$, and $7-30c$ is a nonzero square.
\item $c^2-4c-1=0$, and $7-30c^{-1}$ is a nonzero square.
\end{itemize}
\end{theorem}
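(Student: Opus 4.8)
The plan is to combine the lower bound already supplied by Lemma \ref{cdu_bound_lemma} (which gives ${}_c\Delta_F \ge 3$ for $p^n > 5$) with a case analysis on $a$ for the upper bound and for the characterization of equality. I would first split according to $\#P_a$. When $a = \pm 1$ we have $\#P_a = 3$, and Lemma \ref{cdu_a1_lemma} already yields ${}_c\Delta_F(\pm 1, b) \le 3$ except for the two distinguished pairs $(c,b)=(-2^{-1},2^{-1})$ and $(-2,1)$, where the value is $3$ or $5$ according to $p^n \bmod 8$; this is exactly case (i). Thus the whole burden falls on the regime $a \ne \pm 1$, where $P_a = \{0,1,-a,1-a\}$ consists of four distinct elements.

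For $a \ne \pm 1$ I would use the decomposition ${}_c\Delta_F(a,b) = \#\{x \in P_a : {}_cD_aF(x)=b\} + \#\{x \in \Fpn\setminus P_a : {}_cD_aF(x)=b\}$ and compute the four values $v_0 = a^{-1}-c$, $v_1 = (1+a)^{-1}$, $v_{-a} = 1+ca^{-1}$, $v_{1-a} = -c(1-a)^{-1}$ explicitly from $F(0)=1$, $F(1)=0$, and $F=Inv$ elsewhere. By Lemma \ref{panot2_lemma} the second summand is at most $2$. The upper bound then follows once I show the four $v$'s cannot all coincide: $v_0 = v_{-a}$ forces $c = \frac{1-a}{1+a}$ while $v_1 = v_{1-a}$ forces $c = -\frac{1-a}{1+a}$, and these are compatible only when $a=1$, which is excluded. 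Hence at most three elements of $P_a$ can share a common value $b$, giving ${}_c\Delta_F(a,b) \le 3+2 = 5$.

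For the characterization in this regime, reaching $5$ forces exactly three of the four values to agree on a common $b$ together with two genuine roots in $\Fpn\setminus P_a$. I would run through the $\binom{4}{3}=4$ triples; each is governed by two pairwise equalities which, after eliminating $c$, collapse to one of the quadratics $a^2-a+1=0$, $a^2+a+1=0$ (solvable iff $\chi(-3)=1$, i.e. $p^n\equiv 1\pmod 3$) or $a^2-a-1=0$, $a^2+a-1=0$ (solvable iff $\chi(5)=1$, i.e. $p^n\equiv \pm1\pmod 5$). Back-substitution determines $c$: the first pair yields $c\in\{\pm\sqrt{-3},\pm\tfrac{1}{\sqrt{-3}}\}$ and the second yields $c^2\pm4c-1=0$, matching cases (ii) and (iii). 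Since the triples are interchanged in pairs by Lemma \ref{cdu_symm_lemma} (which swaps $a\leftrightarrow -a$ and $c\leftrightarrow c^{-1}$), it suffices to analyze the two representatives containing $x=0$, namely $\{0,1,-a\}$ and $\{0,1,1-a\}$, and read off the $c^{-1}$ partners by duality; the two roots of each quadratic supply the two sign-variants (the two bullets) within each case. For such a representative the common value is $b = v_0 = a^{-1}-c$, so the non-degeneracy conditions of Lemma \ref{panot2_lemma} reduce, exactly as in the proof of Lemma \ref{cdu_bound_lemma}, to $(c+1)a^2-2a+1\ne0$, which holds automatically here because $a$ satisfies its quadratic (it could fail only for $a=1$); the coinciding $P_a$-points are never roots of $bx^2+(ab+c-1)x+ac$, since $F$ and $Inv$ differ by a nonzero constant there. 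The only surviving requirement for two external solutions is therefore that the discriminant $(ab+c-1)^2-4abc$ be a nonzero square, and substituting the minimal polynomial of $a$ simplifies it to $\tfrac{-3\pm5\sqrt{-3}}{6}$ (case (ii)) and to $7-30c^{\pm1}$ (case (iii)), which are precisely the stated conditions.

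The main obstacle, and the part demanding the most care, is the boundary bookkeeping and the consolidation into three clean cases. In particular I must excise the degeneracy $\sqrt{-3}=\pm5$ (equivalently $p=7$), where $\tfrac{-3\pm5\sqrt{-3}}{6}$ either vanishes or behaves inconsistently across $n$, and where the relevant $c=\pm\tfrac{1}{\sqrt{-3}}$ coincides with the case-(i) value $-2^{-1}$; since $p^n\equiv\pm1\pmod 8$ always holds for $p=7$, these instances are correctly captured by case (i) and are excluded from case (ii) by the side condition $\sqrt{-3}\ne\pm5$. I would also check that the $c$-families defining cases (i), (ii), (iii) are genuinely distinct, so that no $c$ satisfies two defining relations at once and the final ``if and only if'' is clean, and confirm the ``only if'' direction: if none of (i)--(iii) holds then every triple fails either its solvability or its square condition, forcing ${}_c\Delta_F\le 4$.
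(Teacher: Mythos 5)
Your proposal follows essentially the same route as the paper's proof: the lower bound from Lemma \ref{cdu_bound_lemma}, the $a=\pm1$ case via Lemma \ref{cdu_a1_lemma} giving (i), the decomposition of $_c\Delta_F(a,b)$ into contributions from $P_a$ and its complement with Lemma \ref{panot2_lemma} bounding the latter by $2$, the verification that all four values on $P_a$ cannot coincide, the analysis of the four triples reduced to two representatives via the $c\leftrightarrow c^{-1}$, $a\leftrightarrow -a$ duality of Lemma \ref{cdu_symm_lemma}, and the discriminant/non-degeneracy conditions yielding (ii) and (iii). The only differences are cosmetic (you eliminate $c$ to get quadratics in $a$ where the paper solves for $a$ and derives $3c^2+1=0$ and $c^2\pm4c-1=0$), and your computations, including the automatic non-vanishing of $(c+1)a^2-2a+1$ and the absorption of the $p=7$ degeneracy into case (i), check out against the paper's.
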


\begin{proof} The case when $a=\pm1$ is covered by Lemma \ref{cdu_a1_lemma} and we have (i). Hence it remains to show (ii) and (iii) when $a\ne \pm1$. We can see that
\begin{equation}\label{cdaf_pa_list}
\begin{array}{ll}
_cD_aF(0)=F(a)-cF(0)=a^{-1}-c, & {}_cD_aF(1)=F(1+a)-cF(1)=(1+a)^{-1},\\
_cD_aF(-a)=F(0)-cF(-a)=1+ca^{-1}, & {}_cD_aF(1-a)=F(1)-cF(1-a)=c(a-1)^{-1}.
\end{array}
\end{equation}
We investigate all possible cases that $_cD_aF(x)=b$ has at least two solutions in $P_a$.
\begin{equation*}
\begin{array}{ll}
 _c D_aF(0)={}_c D_aF(1)  \Leftrightarrow ca^2+ca-1=0,&
 _c D_aF(0)={}_c D_aF(-a)  \Leftrightarrow a=\frac{1-c}{c+1}\text{ when }c\ne \pm1,\\
 _c D_aF(0)={}_c D_aF(1-a)  \Leftrightarrow ca^2-a+1=0,&
 _c D_aF(1)={}_c D_aF(-a)  \Leftrightarrow a^2+ca+c=0,\\
 _c D_aF(1)={}_c D_aF(1-a)  \Leftrightarrow a=\frac{c+1}{1-c}\text{ when }c\ne \pm1,&
 _c D_aF(-a)={}_c D_aF(1-a)  \Leftrightarrow a^2-a-c=0.
\end{array}
\end{equation*}
If $_cD_aF(0)={}_cD_aF(-a)={}_cD_aF(1)={}_cD_aF(1-a)$ then $a=\frac{1-c}{c+1}=\frac{c+1}{1-c}$ and hence we have $c=0$, a contradiction. Hence $_cD_aF(x)=b$ has at most three solutions in $P_a$. We investigate all possible cases that  $_cD_aF(x)=b$ has three solutions in $P_a$.
\begin{itemize}
\item If $_c D_aF(0)={}_c D_aF(1)={}_c D_aF(-a)$ then $a=\frac{1-c}{c+1}$ is a solution of two equations $ca^2+ca-1=0$ and $a^2+ca+c=0$. Then we have $0=ca^2+ca-1=c\left(\frac{1-c}{c+1}\right)^2+c\cdot \frac{1-c}{c+1}-1=-\frac{3c^2+1}{(c+1)^2}$ and $0=a^2+ca+c=\left(\frac{1-c}{c+1}\right)^2+c\cdot \frac{1-c}{c+1}+c=\frac{3c^2+1}{(c+1)^2}$. Such $c$ exists when $-3$ is a nonzero square, if and only if $p^n\equiv 1\pmod{3}$, and then $c=\pm\frac{1}{\sqrt{-3}}$. Then $a=\frac{\sqrt{-3}\mp1}{\sqrt{-3}\pm1}=\frac{1\pm\sqrt{-3}}{2}$ and $b={}_c D_aF(0)=a^{-1}-c=\frac{2}{1\pm\sqrt{-3}}\mp\frac{1}{\sqrt{-3}}=\frac{3\mp\sqrt{-3}}{6}$. Since 
$(ab+c-1)^2-4abc=\left(\frac{1\pm\sqrt{-3}}{2}\cdot\frac{3\mp\sqrt{-3}}{6}\pm\frac{1}{\sqrt{-3}}-1 \right)^2 \mp 4\cdot \frac{1\pm\sqrt{-3}}{2}\cdot\frac{3\mp\sqrt{-3}}{6}\cdot \frac{1}{\sqrt{-3}}
=\left(\frac{-3\mp\sqrt{-3}}{6}\right)^2+\frac{-2\pm 2\sqrt{-3}}{3}=\frac{-3\pm5\sqrt{-3}}{6}$. Using $b+c=\frac{3\mp\sqrt{-3}}{6}\mp \frac{\sqrt{-3}}{3}=\frac{1\mp\sqrt{-3}}{2}$, we have $a(b+c)+b+c+1=\frac{1\pm \sqrt{-3}}{2}\cdot\frac{1\mp\sqrt{-3}}{2}+\frac{1\mp\sqrt{-3}}{2}+1=\frac{5\mp \sqrt{-3}}{2}$ and $(1-b)a+b+c-1=\frac{3\pm\sqrt{-3}}{6}\cdot \frac{1\pm \sqrt{-3}}{2}+\frac{1\mp\sqrt{-3}}{2}-1=\frac{-3\mp \sqrt{-3}}{6}\ne 0$. So, we have $_c\Delta_F(a,b)=5$ if and only if $p^n\equiv 1\pmod{3}$, $\frac{5\mp \sqrt{-3}}{2}\ne 0$, and $\frac{-3\pm5\sqrt{-3}}{6}$ is a square. Note that all double signs are in same order in above computations this case.
\item If $_c D_aF(0)={}_c D_aF(1)={}_c D_aF(1-a)$ then $a=\frac{c+1}{1-c}$ is a solution of two equations $ca^2+ca-1=0$ and $ca^2-a+1=0$. Then we have $0=c\cdot \left(\frac{c+1}{1-c}\right)^2+c\cdot \frac{c+1}{1-c}-1=\frac{c^2+4c-1}{(1-c)^2}$ and $0=c\cdot \left(\frac{c+1}{1-c}\right)^2-\cdot \frac{c+1}{1-c}+1=\frac{c(c^2+4c-1)}{(1-c)^2}$, and hence $c^2+4c-1=0$. Such $c$ exists if $(-4)^2-4\cdot(-1)=20=2^2\cdot 5$ is a square, if and only if $p=5$ or $p^n\equiv \pm 1\pmod{5}$. We set $b={}_c D_aF(0)=a^{-1}-c=\frac{1-c}{c+1}-c = \frac{1-2c-c^2}{c+1}$. Using the proof of Lemma \ref{cdu_bound_lemma}, we have $(ab+c-1)^2-4abc=c^2(a+1)^2-4c=c^2\cdot \left(\frac{2}{1-c}\right)^2-4c=-\frac{4(c^3-3c^2+c)}{(c-1)^2}=-\frac{4((c-7)(c^2+4c-1)+30c-7)}{(c-1)^2}=-\frac{4(30c-7)}{(c-1)^2}$ and $(b+c)a+b+c+1=a^{-1}\ne 0$ and $(1-b)a+b+c-1=a^{-1}((c+1)a^2-2a+1)=a^{-1}\left((c+1)\cdot \left(\frac{c+1}{1-c}\right)^2-2\cdot \frac{c+1}{1-c}+1\right)=\frac{c(c^2+6c+1)}{a(c-1)^2}=\frac{2c(c+1)}{a(c-1)^2}\ne 0$ since $c\ne -1$. So, we have $_c\Delta_F(a,b)=5$ if and only if $p^n\equiv 0,\pm 1\pmod{5}$ and $7-30c$ is a nonzero square. If $p=5$ then we have $c=-2$, and $7-30c=2$ is a square if and only if $p^n\equiv \pm1 \pmod{8}$ and hence this case is already covered by (i).
\item If $_c D_aF(0)={}_c D_aF(-a)={}_c D_aF(1-a)$ then we have $_{c^{-1}} D_{-a}F(0)={}_{c^{-1}} D_{-a}F(1)={}_{c^{-1}} D_{-a}F(-(-a))$ by Lemma \ref{cdu_symm_lemma}. We set $c^{-1}=\pm\frac{1}{\sqrt{-3}}$ and $-a=\frac{1-c^{-1}}{c^{-1}+1}=\frac{1\pm\sqrt{-3}}{2}$ or equivalently $c=\pm\sqrt{-3}$ and $a=-\frac{1\pm\sqrt{-3}}{2}$, and $b=(-a)^{-1}-c^{-1}=\frac{3\mp\sqrt{-3}}{6}$. By the above observation and Lemma \ref{cdu_symm_lemma}, $5={}_{c^{-1}}\Delta_F(-a,-bc^{-1})={}_c\Delta_F(a,b)$ if and only if $p^n\equiv 1\pmod{3}$, $\frac{5\mp \sqrt{-3}}{2}\ne 0$, and $\frac{-3\pm5\sqrt{-3}}{6}$ is a square.
\item If $_c D_aF(1)={}_c D_aF(-a)={}_c D_aF(1-a)$ then we have $_{c^{-1}} D_{-a}F(0)={}_{c^{-1}} D_{-a}F(1)={}_{c^{-1}} D_{-a}F(1-(-a))$ by Lemma \ref{cdu_symm_lemma}. We assume that $(c^{-1})^2+4c^{-1}-1=0$ and $-a=\frac{c^{-1}+1}{1-c^{-1}}$ or equivalently $c^2-4c-1=0$ and $a=\frac{1+c}{1-c}$, and $b=-a^{-1}-c^{-1}=\frac{1-2c^{-1}-(c^{-1})^2}{1+c^{-1}}$. By the above observation and Lemma \ref{cdu_symm_lemma}, $5={}_{c^{-1}}\Delta_F(-a,b)={}_c\Delta_F(a,b)$ if and only if $p\equiv 0,\pm 1 \pmod{5}$ and $7-30c^{-1}$ is a nonzero square. Similarly with the above case, the case $p=5$ can be covered by (i).
\end{itemize}
Conversely, suppose each condition in (ii) and (iii) is satisfied. Using the same $a$ and $b$ in previous analysis, we can see that $_cD_aF(x)=b$ has three solutions in $P_a$. By Lemma \ref{panot2_lemma}, the given element is a nonzero square, then $_cD_aF(x)=b$ has two solutions in $\Fpn\setminus P_a$. Hence we have $_c\Delta_F(a,b)=5$ which completes the proof.
\end{proof}

Next we further investigate the differential uniformity of $F$, the special case $c=1$.

\begin{theorem}\label{duf_thm} Let $F=Inv\circ(0,1)$ and $p^n >5$. Then $3 \le \Delta_F \le 5$. Furthermore,\\
(i) $\Delta_F=5$ if and only if $p=3$ and $n$ is even.\\
(ii) $\Delta_F=4$ if and only if at least one of the followings is satisfied :
\begin{itemize}
\item $p=5$ and $n\equiv 0\pmod{4}$.
\item $p^n\equiv \pm1 \pmod{5}$ with $p\ne 3$, and at least one of $\frac{-5+\sqrt{5}}{2}$ and $\frac{-5-\sqrt{5}}{2}$ is a square.
\item $p^n\equiv 1 \pmod{3}$ with $p\ne 5$, and at least one of $\frac{-5+3\sqrt{-3}}{2}$ and $\frac{-5-3\sqrt{-3}}{2}$ is a nonzero square.
\end{itemize}
\end{theorem}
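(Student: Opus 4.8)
The plan is to specialize the machinery of this section to $c=1$ and to follow the architecture of the proof of Theorem \ref{cduf_thm}. For the lower bound $\Delta_F\ge 3$ I would simply invoke Lemma \ref{cdu_bound_lemma}, whose hypothesis \eqref{inv01} already permits $c=1$; so it remains to establish the upper bound together with the exact characterizations (i) and (ii). Setting $c=1$ in \eqref{cdaf_pa_list} and in the six pairwise collision conditions of the proof of Theorem \ref{cduf_thm}, the four ``diagonal'' collisions are governed by the quadratics $a^2+a-1=0$ (from $D_aF(0)=D_aF(1)$), $a^2-a+1=0$ (from $D_aF(0)=D_aF(1-a)$), $a^2+a+1=0$ (from $D_aF(1)=D_aF(-a)$) and $a^2-a-1=0$ (from $D_aF(-a)=D_aF(1-a)$), whereas the two remaining conditions $D_aF(0)=D_aF(-a)$ and $D_aF(1)=D_aF(1-a)$ both collapse to the impossibility $2=0$ and so never occur. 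This last fact is the structural heart of the $c=1$ case: since the two ``missing'' collisions are exactly the ones needed to close up a triple among $\{0,1,-a,1-a\}$, there can be no triple collision in $P_a$ when $a\ne\pm1$, so Lemma \ref{panot2_lemma} forces ${}_1\Delta_F(a,b)\le 2+2=4$ there.

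Next I would dispose of $a=\pm1$, the only place where $\#P_a<4$ and hence the only place a value of $5$ can occur. For $a=1$ one computes $D_1F(0)=-1$, $D_1F(1)=2^{-1}$, $D_1F(-1)=2$, and a direct check shows these three are pairwise distinct unless $p=3$, in which case they all equal $-1$. Thus for $p\ne 3$ we get ${}_1\Delta_F(1,b)\le 3$, while for $p=3$ the value $b=-1$ already has three preimages in $P_1$; here $5=2$, so Lemma \ref{panot2_lemma} (with discriminant $ab(ab-4)=5$) contributes two further solutions exactly when $2$ is a nonzero square, i.e.\ when $p^n\equiv\pm1\pmod 8$, which for $p=3$ means $n$ even. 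The symmetry ${}_1\Delta_F(-1,b)={}_1\Delta_F(1,-b)$ from Lemma \ref{cdu_symm_lemma} settles $a=-1$. Together these give $\Delta_F\le 5$, with equality iff $p=3$ and $n$ is even, which is (i).

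It then remains to pin down when ${}_1\Delta_F(a,b)=4$ for $a\ne\pm1$, which is (ii). By the four quadratics above and the symmetry $a\mapsto -a$ of Lemma \ref{cdu_symm_lemma}, it suffices to treat $a^2+a-1=0$ and $a^2-a+1=0$; the cases $a^2-a-1=0$ and $a^2+a+1=0$ are their images under $a\mapsto -a$. For a root of $a^2+a-1=0$ one has $a^{-1}=a+1$, so the common value is $b=a^{-1}-1=a$, whence $ab=a^2=1-a$ and the outside discriminant simplifies to $ab(ab-4)=a-2=\frac{-5\pm\sqrt5}{2}$; requiring this to be a nonzero square (after checking the two exclusion inequalities of Lemma \ref{panot2_lemma} are automatic) yields the second bullet, the existence of the root forcing $p^n\equiv\pm1\pmod5$. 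Likewise a root of $a^2-a+1=0$ gives $a^{-1}=1-a$, $b=-a$, $ab=1-a$, and $ab(ab-4)=3a-4=\frac{-5\pm3\sqrt{-3}}{2}$, giving the third bullet and forcing $p^n\equiv1\pmod3$.

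The main obstacle will be the degenerate small characteristics. At $p=5$ the quadratic $a^2+a-1$ acquires a double root ($a=2$) and the discriminant $ab(ab-4)$ collapses to $0$, so that branch contributes only a single outside solution and never reaches $4$; the value $4$ can then arise only from $a^2-a+1=0$, where in characteristic $5$ the quantity $\frac{-5\pm3\sqrt{-3}}{2}$ reduces to $\mp\sqrt2$, so that the ``nonzero square'' requirement becomes the quartic condition that $2$ be a fourth power in $\Fpn$. A lifting-the-exponent computation gives $v_2(5^n-1)=2+v_2(n)$, so $2$ is a fourth power precisely when $4\mid n$, producing the first bullet and explaining the exclusion $p\ne5$ in the third. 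I would similarly record the exclusion $p\ne3$ in the second bullet: for $p=3$ the congruence $p^n\equiv\pm1\pmod5$ already forces $n$ even, a regime in which $\Delta_F=5$ by (i), so it must be removed from the list of $\Delta_F=4$ conditions. Verifying that these square and quartic conditions are genuine equivalences --- that the exclusion inequalities of Lemma \ref{panot2_lemma} never silently fail, and that the four branches neither overlap nor omit a case --- is the most delicate bookkeeping in the argument.
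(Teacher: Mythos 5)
Your proposal is correct and follows essentially the same route as the paper's proof: the same ``no triple collision in $P_a$ for $a\ne\pm1$'' argument via the two impossible pairs $\{0,-a\}$ and $\{1,1-a\}$, the same reduction of part (ii) to the quadratics $a^2+a-1=0$ and $a^2-a+1=0$ with identical discriminant computations $a-2$ and $3a-4$, and the same special handling of $p=3$ and $p=5$ (your lifting-the-exponent derivation of the $4\mid n$ condition is just a rephrasing of the paper's order-of-$5$-mod-$16$ computation). The only nitpick is that at $p=5$, $a=2$ the single root of the outside quadratic actually lies in $P_a$ and so contributes zero further solutions rather than one, but your conclusion that this branch never reaches $4$ stands either way.
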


\begin{proof} By Theorem \ref{cduf_thm}, $3 \le \Delta_F \le 5$ is straightforward. 

\bigskip
\noindent (i) We first show that $D_aF(x)=b$ has at most two solutions in $\Fpn\setminus P_a$ and hence $\Delta_F(a,b)\le 4$, when $a\in\Fpn\setminus\{0,\pm1\}$ and $b\in\Fpn$. We have the following to substitute $c=1$ in \eqref{cdaf_pa_list}
\begin{equation*}
\begin{array}{ll}
D_aF(0)=F(a)-F(0)=a^{-1}-1, & D_aF(1)=F(1+a)-F(1)=(1+a)^{-1},\\
D_aF(-a)=F(0)-F(-a)=1+a^{-1}, & D_aF(1-a)=F(1)-F(1-a)=(a-1)^{-1}.
\end{array}
\end{equation*}
We can see that $D_aF(x-a)=F(x)-F(x-a)=-D_{-a}(x)$. 
If $D_aF(0)=D_aF(-a)$ we have $1=-1$, a contradiction, and hence we have $D_aF(0)\ne D_aF(-a)$. Similarly we have  $D_aF(1)\ne D_aF(1-a)$. Hence $D_aF(u)\ne D_aF(u-a)$ for $u\in P$. If there is $b\in\Fpn$ such that $D_aF(x)=b$ has at least three solutions in $P_a$, then there is $u\in P_a$ such that both $u$ and $u-a$ are solutions of $D_aF(x)=b$, which is a contradiction. Hence $D_aF(x)=b$ has at most two solutions in $P_a$.

So we have $\Delta_F=5$ if and only if $\Delta_F(1,b)=5$ for some $b\in\Fpn$. We have $c=1=-2^{-1}$ which is equivalent to $1=(-2)\cdot (-2)^{-1}=(-2)\cdot 1=-2$ if and only if $p=3$. Then $3^n\equiv \pm 1\pmod{8}$ if and only if $n$ is even. Hence by Lemma \ref{cdu_a1_lemma}, $\Delta_F=5$ if and only if $p=3$ and $n$ is even.

\bigskip
\noindent (ii) We investigate all possible cases that $D_aF(x)=b$ has at least two solutions in $P_a$ applying $D_aF(u)\ne D_aF(u-a)$ for $u\in P$. Since $D_aF(x-a)=F(x)-F(x-a)=-D_{-a}(x)$, we have $D_aF(-a)=D_aF(u-a)$ if and only if $D_{-a}F(0)=D_{-a}F(u)$ for $u\in\{1,1-a\}$. Hence it is enough to investigate the following cases.
\begin{itemize}
\item $D_aF(0)=D_aF(1)$ implies $a^2+a-1=0$. $D=(-1)^2-4\cdot(-1)=5$ is a nonzero square if and only if $p\equiv \pm 1 \pmod{5}$.
\item $D_aF(0)=D_aF(1-a)$ implies $a^2-a+1=0$. $D=(-1)^2-4\cdot 1 =-3$ is a nonzero square if and only if $p\equiv 1 \pmod{3}$.
\end{itemize}
We set $b=D_aF(0)=a^{-1}-1$. Applying the proof of Lemma \ref{cdu_bound_lemma} we have $(b+c)a+b+c-1=a^{-1}\ne0$ and $(1-b)a+b+c-1=a^{-1}(2a^2-2a+1)$. 
\begin{itemize}
\item If $D_aF(0)=D_aF(1)$ then $a^2=-a+1$. Hence $0=2a^2-2a+1=2(-a+1)-2a+1=-4a+3$ implies $a=\frac{3}{4}$. Then $0=a^2+a-1=\left(\frac{3}{4}\right)^2+\frac{3}{4}-1=\frac{5}{16}$ requires $p=5$.
\item If $D_aF(0)=D_aF(1-a)$ then $a^2-a=-1$. Hence we have $2a^2-2a+1=2(a^2-a)+1=-1\ne0$.
\end{itemize}

\noindent 1. If $p=3$ then it remains to show that $\Delta_F(a,b)\le 3$ for all $a\in \Fpn\setminus\{0,\pm1\}$ and $b\in\Fpn$ when $n$ is odd, by Lemma \ref{cdu_a1_lemma}.  Then $a^2-a+1=0$ has one solution $a=-1$, contradicts to the assumption. Furthermore we have $3^n\equiv \pm 2 \pmod{5}$ if $n$ is odd. So $D_aF(x)=b$ has at most one solution in $P_a$ when $a\in \Fpn\setminus\{0,1\}$, and hence we have $\Delta_F(a,b)\le 3$ for all $a\in \Fpn\setminus\{0,1\}$ and $b\in\Fpn$.

\bigskip
\noindent 2. If $p=5$ then $a^2+a-1=0$ has one solution. It is easy to see that $a=2$ is the only solution of $a^2+a-1=0$ and then $b=D_2F(0)=2^{-1}-1=2$. As mentioned above, $D_2 F(x)=2$ has two solutions $x=0,1$ in $P_a$. And then we have $(ab+c-1)^2-4abc=0$ and hence $bx^2+(ab+c-1)x+ac=0$ has exactly one solution, but by the above observation $x=1-a=-1\in P_a$ is the solution of $bx^2+(ab+c-1)x+ac=0$ and hence it cannot be a solution of $D_2 F(x)=2$. Hence $D_2 F(x)=2$ has no solutions in $\Fpn\setminus P_a$. So we have $\Delta_F(2,2)= 2$. \\
If $n$ is odd then $5^n\equiv 2\pmod{3}$ and hence $D_aF(x)=b$ has at most one solution in $P_a$. By Lemma \ref{cdu_a1_lemma} we have $\Delta_F=3$. If $n$ is even then $5^n\equiv 1 \pmod{3}$ so $a^2+a+1=0$ has two solutions. Then $a^3=-1$ and hence $a^{-1}=-a^2$. So we have $b=D_aF(0)=a^{-1}-1=-a^2-1=-a$ then $(ab+c-1)^2-4abc=a^4+4a^2=-a+4(a-1)=3a+1$. Note that $(3a+1)^2=9a^2+6a+1=-a^2+a+1=2$ and hence $3a+1$ is a square root of $2$. Since the multiplicative order of $2$ in modulo $5$ is $4$, we have $1=\chi(3a+1)=(3a+1)^{\frac{5^n-1}{2}}=2^{\frac{5^n-1}{4}}$ if and only if $4\mid \frac{5^n-1}{4}$ if and only if $5^n\equiv 1 \pmod{16}$. Since the multiplicative order of $5$ in modulo $16$ is $4$, we have $5^n\equiv 1 \pmod{16}$ if and only if $n\equiv 0 \pmod{4}$.  Hence, we have $\Delta_F=4$ if and only if $n\equiv 0\pmod{4}$. 

\bigskip
\noindent 3. Now we investigate the case $p\ne 3,5$.
\begin{itemize}
\item If $p^n\equiv \pm1\pmod{5}$, then $5$ is a square and hence $D_aF(0)=D_aF(1)$ where $a^2+a-1=0$. Then $a=\frac{-1\pm\sqrt{5}}{2}$ and $b=a^{-1}-1$ and $(ab+c-1)^2-4abc=a^2+2a-3=a^2+a-1+a-2=a-2=\frac{-5\pm\sqrt{5}}{2}$. Hence by Lemma \ref{panot2_lemma} we have $\Delta_F(a,b)=4$ if and only if at least one of $\frac{-5+\sqrt{5}}{2}$ and $\frac{-5-\sqrt{5}}{2}$ is a square. Note that $\frac{-5\pm\sqrt{5}}{2}\ne0$, since if $\frac{-5\pm\sqrt{5}}{2}=0$ then $x=\pm\sqrt{5}$ is a solution of $x^2-x=0$ which implies $1=5$ or $0=5$, which is a contradiction to $p\ne 5$.
\item If $p^n\equiv 1\pmod{3}$ then $-3$ is a square and hence $D_aF(0)=D_aF(1-a)$ where $a^2-a+1=0$. Then $a=\frac{1	\pm\sqrt{-3}}{2}$ and $b=a^{-1}-1$ and $(ab+c-1)^2-4abc=a^2+2a-3=a^2-a+1+3a-4=3a-4=\frac{-5\pm3\sqrt{-3}}{2}$. Hence by Lemma \ref{panot2_lemma} we have $\Delta_F(a,b)=4$ if and only if at least one of  $\frac{-5+3\sqrt{-3}}{2}$ and $\frac{-5-3\sqrt{-3}}{2}$ is a nonzero square.
\end{itemize}
Conversely, suppose each condition in (ii) is satisfied. Using the same $a$ and $b$ in previous analysis, we can see that $_cD_aF(x)=b$ has two solutions in $P_a$. By Lemma \ref{panot2_lemma}, the given element in each case is a nonzero square, then $D_aF(x)=b$ has two solutions in $\Fpn\setminus P_a$. Hence we have $\Delta_F(a,b)=4$ which completes the proof.
\end{proof}

\section{The $c$-differential uniformity of $Inv\circ (1,\gamma)$}\label{sec_inv1r}

Let $F=Inv \circ (1,\gamma)$ where $\gamma\in\Fpn\setminus \{0,1\}$. Similarly with section \ref{sec_inv01} since $F$ is permutation, we have $_c\Delta_F(0,b)=1$ for all $b\in\Fpn$ and $c\in\Fpn\setminus\{1\}$, and we also have $_0\Delta_F(a,b)=1$ for all $a,b\in\Fpn$. Hence, throughout this section we assume that
\begin{equation}\label{inv1r}
F=Inv\circ (1,\gamma),\ \gamma\in \Fpn \setminus \{0,1\}, \ a\ne 0,\ c\ne 0
\end{equation}
We also denote throughout this section that
$$P=\{x\in\Fpn : F(x)\ne Inv(x)\}\cup\{0\}=\{0,1,\gamma \},\ \ P_a=P\cup\{x-a : x\in P\}=\{0,1,\gamma,-a,1-a,\gamma-a\}.$$
It is easy to see that 
$$_c\Delta_F(a,b)=\#\{x\in P_a : {}_cD_aF(x)=b\}+\#\{x\in \Fpn \setminus P_a : {}_c D_aF(x)=b\}.$$
The following lemma characterizes the case $\#\{x\in \Fpn \setminus P_a : D_aF(x)=b\}=2$.

\begin{lemma}\label{panot2_lemma_1r} Under the same assumption as in \eqref{inv1r} let $a\in\Fpnmul$ and $b\in\Fpn$. Then $_c D_aF(x)=b$ has at most two solutions in $\Fpn\setminus P_a$. Furthermore, $_c D_aF(x)=b$ has two solutions in $\Fpn\setminus P_a$ if and only if $(ab+c-1)^2-4abc$ is a nonzero square and $b\ne 0$, $(b+c)a+b+c-1\ne0$, $(1-b)a+b+c-1\ne0$, $(b\gamma+c)a+\gamma(b\gamma+c-1) \ne 0$ and $(1-b\gamma)a+\gamma(b\gamma+c-1)\ne0$.
\end{lemma}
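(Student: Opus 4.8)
The plan is to follow the template of Lemma~\ref{panot2_lemma} almost verbatim, the only new feature being the two extra exceptional points $\gamma$ and $\gamma-a$ that now sit inside $P_a$. First I would observe that if $x\in\Fpn\setminus P_a$ then neither $x$ nor $x+a$ lies in $\{0,1,\gamma\}$: indeed $0,1,\gamma\in P_a$ rules out $x$, while $-a,1-a,\gamma-a\in P_a$ rules out $x+a$. Consequently $F(x)=Inv(x)=x^{-1}$ and $F(x+a)=Inv(x+a)=(x+a)^{-1}$, with both $x$ and $x+a$ nonzero, so $_cD_aF(x)=b$ becomes $(x+a)^{-1}-cx^{-1}=b$. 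Clearing the (nonzero) denominators gives the quadratic $bx^2+(ab+c-1)x+ac=0$, which has at most two roots; hence $_cD_aF(x)=b$ has at most two solutions in $\Fpn\setminus P_a$. This is the identical reduction used in Lemma~\ref{panot2_lemma}, the sole change being the enlarged exceptional set $P_a$.

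Next I would argue the stated equivalence. The equation has two solutions in $\Fpn\setminus P_a$ exactly when the quadratic has two distinct roots in $\Fpn$ and \emph{neither} root lies in $P_a$: a root falling inside $P_a$ would fail to solve $_cD_aF(x)=b$, since the quadratic was derived under the hypothesis $x\notin P_a$, so it would not count toward the solutions in the complement. By Lemma~\ref{quad_lemma_odd}, applied with leading coefficient $b$, the quadratic has two distinct roots precisely when $b\ne0$ and its discriminant $(ab+c-1)^2-4abc$ is a nonzero square. It then remains to translate the requirement ``no root of the quadratic lies in $P_a=\{0,1,\gamma,-a,1-a,\gamma-a\}$'' into the four non-vanishing inequalities of the statement.

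To do this I would substitute each of the six elements of $P_a$ into $bx^2+(ab+c-1)x+ac$ and demand the result be nonzero. The points $x=0$ and $x=-a$ yield the values $ac$ and $a$, which are automatically nonzero since $a,c\ne0$ and thus impose no condition. The points $x=1$ and $x=1-a$ reproduce exactly the two conditions $(b+c)a+b+c-1\ne0$ and $(1-b)a+b+c-1\ne0$ already encountered in Lemma~\ref{panot2_lemma}. The genuinely new computation is for the two remaining points: substituting $x=\gamma$ should collapse, after collecting terms, to $(b\gamma+c)a+\gamma(b\gamma+c-1)$, and substituting $x=\gamma-a$ to $(1-b\gamma)a+\gamma(b\gamma+c-1)$. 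I expect the main (though still routine) obstacle to be the bookkeeping in this last substitution, where the $a^2b$ terms and the $ac$ terms must cancel against each other to produce the claimed expression. Finally, for the converse I would note that once all four inequalities hold together with $b\ne0$ and the discriminant condition, the quadratic has two distinct roots both lying outside $P_a$, and each such root is a bona fide solution of $_cD_aF(x)=b$ in $\Fpn\setminus P_a$, which completes the proof.
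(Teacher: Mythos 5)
Your proposal is correct and follows essentially the same route as the paper: reduce to the quadratic $bx^2+(ab+c-1)x+ac=0$ off $P_a$, apply Lemma \ref{quad_lemma_odd} for the two-root criterion, and substitute the six elements of $P_a$ (with $x=0$ and $x=-a$ giving the automatically nonzero values $ac$ and $a$, and the other four giving exactly the stated non-vanishing conditions). The substitution computations you anticipate for $x=\gamma$ and $x=\gamma-a$ do collapse to the claimed expressions, so nothing is missing.
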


\begin{proof} Similarly with Lemma \ref{panot2_lemma}, since $F(x)=Inv(x)$ for all $x\in\Fbn\setminus P_a$, $_c D_aF(x)=b$ has at most two solutions in $\Fbn\setminus P_a$, and has two solutions if and only if $(ab+c-1)^2-4abc$ is a nonzero square and $b\ne 0$. We require that $bx^2+(ab+c-1)x+ac=0$ has no solutions in $P_a$. We check this by substituting every element in $P_a$ to $bx^2+(ab+c-1)x+ac=0$. 
\begin{itemize}
\item If $x=0$ or $x=-a$ then we have $a=0$ which is a contradiction to $a\ne 0$.
\item If $x=1$ then we have $(b+c)a+b+c-1=0$.
\item If $x=1-a$ then we have $(1-b)a+b+c-1=0$.
\item If $x=\gamma$ then we have $(b\gamma+c)a+\gamma(b\gamma+c-1)=0$.
\item If $x=\gamma-a$ then we have $(1-b\gamma)a+\gamma(b\gamma+c-1)=0$.
\end{itemize}
Thus $bx^2+(ab+c-1)x+ac=0$ has no solutions in $P_a$ if and only if $(b+c)a+b+c-1\ne0$, $(1-b)a+b+c-1\ne0$, $(b\gamma+c)a+\gamma(b\gamma+c-1)\ne 0$ and $(1-b\gamma)a+\gamma(b\gamma+c-1)\ne0$.\\
Similarly with Lemma \ref{panot2_lemma}, if all conditions in this theorem are satisfied, then $_cD_aF(x)=b$ has two solutions in $\Fpn\setminus P_a$.
\end{proof}

The following lemma shows symmetry of the $c$-differential uniformity with respect to $\gamma$, and it is useful for our results.

\begin{lemma}\label{cdu_symm_lemma_1r} Let $\gamma\ne0,1$ and $F=Inv\circ(1,\gamma)$ and $F'=Inv\circ (1,\gamma^{-1})$. Then $_c\Delta_{F'}(a,b)={}_c\Delta_F(\gamma a,\gamma^{-1}b)$ for all $a,b\in\Fpn$ and hence $_c\Delta_F={}_c\Delta_{F'}$ for all $c\in\Fpn$.
\end{lemma}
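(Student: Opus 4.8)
The plan is to show that the entire lemma rests on a single pointwise identity, namely
\[
F'(x) = \gamma\, F(\gamma x) \qquad \text{for all } x \in \Fpn .
\]
This says $F'$ is obtained from $F$ by scaling both input and output by $\gamma$. Abstractly, since $y \mapsto \gamma y$ is an affine permutation of degree one, this identity together with Proposition \ref{cdu_aff_prop} already forces ${}_c\Delta_F = {}_c\Delta_{F'}$; but because we want the sharper per-coordinate statement ${}_c\Delta_{F'}(a,b) = {}_c\Delta_F(\gamma a,\gamma^{-1}b)$, I would instead track the solution sets explicitly through the scaling.

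First I would establish the identity. The cleanest route is to note that the transposition interchanging $1$ and $\gamma^{-1}$ is the conjugate of the one interchanging $1$ and $\gamma$ by the scaling $S_\gamma(x)=\gamma x$; writing permutations multiplicatively, one checks on the three relevant inputs that $(1,\gamma^{-1}) = S_\gamma^{-1}\circ (1,\gamma)\circ S_\gamma$, since $S_\gamma^{-1}$ carries the swapped pair $\{1,\gamma\}$ to $\{\gamma^{-1},1\}$. Combining this with $Inv(\gamma^{-1}z)=\gamma\,Inv(z)$ (valid for all $z$, including $z=0$) gives
\[
F'(x) = Inv\big((1,\gamma^{-1})(x)\big) = \gamma\, Inv\big((1,\gamma)(\gamma x)\big) = \gamma\, F(\gamma x).
\]
Alternatively one can verify the identity by brute force, checking the cases $x=1$, $x=\gamma^{-1}$, and $x\notin\{1,\gamma^{-1}\}$ directly against the definitions of $F$ and $F'$.

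Next I would substitute this into the $c$-differential. From $F'(x)=\gamma F(\gamma x)$ we obtain
\[
{}_cD_aF'(x) = F'(x+a)-cF'(x) = \gamma\big(F(\gamma x+\gamma a)-cF(\gamma x)\big) = \gamma\cdot {}_cD_{\gamma a}F(\gamma x),
\]
so ${}_cD_aF'(x)=b$ is equivalent to ${}_cD_{\gamma a}F(\gamma x)=\gamma^{-1}b$. Since $x\mapsto \gamma x$ is a bijection of $\Fpn$, the two solution sets are in bijection, which is exactly ${}_c\Delta_{F'}(a,b)={}_c\Delta_F(\gamma a,\gamma^{-1}b)$. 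Taking the maximum over all $(a,b)$ and noting that $(a,b)\mapsto(\gamma a,\gamma^{-1}b)$ is a bijection of $\Fpn\times\Fpn$ preserving the constraint $a\ne 0$ (which is active only when $c=1$) then yields ${}_c\Delta_F={}_c\Delta_{F'}$. I expect the only genuine obstacle to be pinning down the identity $F'(x)=\gamma F(\gamma x)$ correctly — in particular getting the outer factor $\gamma$ (not $\gamma^{-1}$) right and confirming it at the two swapped points $x=1,\gamma^{-1}$ — after which the differential computation and the counting are routine.
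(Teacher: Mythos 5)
Your proof is correct and follows essentially the same route as the paper: both rest on the identity $F'=A\circ F\circ A$ with $A(x)=\gamma x$ (i.e., $F'(x)=\gamma F(\gamma x)$), substitute it into the $c$-differential to get ${}_cD_aF'(x)=\gamma\cdot{}_cD_{\gamma a}F(\gamma x)$, and conclude by the bijection $x\mapsto\gamma x$. The only difference is that you spell out the verification of the conjugation identity at the swapped points, which the paper dismisses as ``not difficult to see.''
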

\begin{proof} Let $A(x)=\gamma x$. Then it is not difficult to see that $F'= A \circ F\circ A$. Then $_cD_aF'(x)=b$ implies
\begin{align*} 
b&={}_cD_aF'(x)=(A\circ F\circ A)(x+a)-c(A\circ F\circ A)(x)=(A\circ F)(\gamma x+\gamma a)-c(A\circ F)(\gamma x)\\
&=\gamma\left(F(\gamma x +\gamma a)-F(\gamma x)\right)=\gamma(F(X+\gamma a)-cF(X))=\gamma \cdot {}_cD_{\gamma a}F(X)
\end{align*}
if and only if $\gamma^{-1} b ={}_cD_{\gamma a}F(X)$, where $X=\gamma x$ in above.
\end{proof}

We can see that $\#P_a <6$ if and only if $a\in\{\pm 1, \pm\gamma, \pm (\gamma-1)\}$. We investigate the number of solutions of $_cD_aF(x)=b$ in $P_a$ for this case.

\begin{lemma}\label{panot6_lemma}
Let $a\in\{\pm 1, \pm\gamma, \pm (\gamma-1)\}$. Then\\
(i) $_cD_aF(x)=b$ has five solutions in $P_a$ if and only if $p=5$, $\gamma=-1$, $c=1$ and $(a,b)\in\{(2,3),(3,2)\}$.\\
(ii) $_cD_aF(x)=b$ has four solutions in $P_a$ if and only if $p=5$, $\gamma=-1$, $c=1$ and $(a,b)\in\{(1,-1),(-1,1)\}$.\\
Otherwise, $_cD_aF(x)=b$ has at most three solutions in $P_a$.
\end{lemma}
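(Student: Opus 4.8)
The plan is to use that $_cD_aF(x)=F(x+a)-cF(x)$ is completely explicit on the (at most five distinct) points of $P_a$, and to bound how many of these values can equal a single $b$. First I would cut the six values of $a$ down to three using Lemma~\ref{cdu_symm_lemma}: the bijection $x\mapsto x+a$ sends the solutions of $_cD_aF(x)=b$ lying in $P_a$ onto the solutions of $_{c^{-1}}D_{-a}F(x)=-bc^{-1}$ lying in $P_{-a}$, because $x\mapsto x+a$ maps $P_a$ onto $P_{-a}$ (indeed $P_a+a=(P+a)\cup P=P_{-a}$). Hence the three cases $a\in\{-1,-\gamma,-(\gamma-1)\}$ reduce to $a\in\{1,\gamma,\gamma-1\}$ under $(a,b,c)\mapsto(-a,-bc^{-1},c^{-1})$; for $c=1$ this is $(a,b)\mapsto(-a,-b)$, exactly the involution interchanging the exceptional pairs $(2,3)\leftrightarrow(3,2)$ and $(1,-1)\leftrightarrow(-1,1)$. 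One may further merge $a=1$ and $a=\gamma$ via Lemma~\ref{cdu_symm_lemma_1r}, whose map $x\mapsto\gamma x$ carries $P_a$ (for parameter $\gamma^{-1}$) onto $P_{\gamma a}$ (for parameter $\gamma$), so that case $a=\gamma$ for $\gamma$ is case $a=1$ for $\gamma^{-1}$.

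For a fixed admissible $a$ I would list the value of $_cD_aF$ at each point of $P_a$, noting that for these special $a$ one of the shifts $-a,1-a,\gamma-a$ already lies in $P=\{0,1,\gamma\}$, so $\#P_a\le5$. For instance, for $a=1$ the five points $0,1,\gamma,-1,\gamma-1$ give, in the generic range $\gamma\notin\{-1,2\}$, the values
$$\gamma^{-1},\quad 2^{-1}-\gamma^{-1}c,\quad (\gamma+1)^{-1}-c,\quad c,\quad 1-(\gamma-1)^{-1}c,$$
which are affine-linear in $c$ with the distinct slopes $0,-\gamma^{-1},-1,1,-(\gamma-1)^{-1}$. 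Thus the number of $x\in P_a$ with $_cD_aF(x)=b$ is the number of these five lines in the $(c,b)$-plane passing through a common point. Requiring two of the equalities $v_i=v_j=v_k$ solves for $c$ as a rational function of $\gamma$ and turns the third into a polynomial condition on $\gamma$; demanding a fourth value to agree as well over-determines $\gamma$, and I expect no admissible $\gamma$ to survive, so that at most three values coincide and $_cD_aF(x)=b$ has at most three solutions in $P_a$. The same bookkeeping is then carried out for $a=\gamma$ and $a=\gamma-1$.

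This generic argument must be completed by the finitely many degeneracies, where either two points of $P_a$ merge or a shift $x+a$ enters $\{1,\gamma\}$ and changes the formula for $F(x+a)$; by the slope computation these occur exactly at $\gamma\in\{-1,2\}$, and one must also separate out the small primes where the field arithmetic degenerates, each checked by hand. The decisive one is $\gamma=-1$: here $F(x)=x^{-1}$ off $\{1,-1\}$ and swaps $\pm1$, so $F$ agrees with $x\mapsto -x$ precisely at the points with $x^2=-1$ (together with $0,\pm1$), and these exhaust $\Fpn$ only when $\Fpn=\F_5$. In that single case $F(x)=-x$ is affine, whence $_cD_aF(x)=(c-1)x-a$: for $c=1$ it is the constant $-a$, so $b=-a$ is attained at every point of $P_a$, giving $\#P_a$ solutions — five when $a=\pm(\gamma-1)=\pm2$ and four when $a=\pm1=\pm\gamma$ — while for $c\ne1$ it is injective and contributes at most one. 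This reproduces exactly the pairs $(2,3),(3,2)$ and $(1,-1),(-1,1)$. The main obstacle is precisely this degenerate layer: tracking every collision of points of $P_a$ and every small-characteristic coincidence, and verifying that $p=5,\gamma=-1$ is the only one that pushes the count in $P_a$ above three.
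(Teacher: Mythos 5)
Your skeleton coincides with the paper's: the reduction from the six values of $a$ to $\{1,\gamma,\gamma-1\}$ via Lemma \ref{cdu_symm_lemma} (using $P_a+a=P_{-a}$) is exactly how the paper's Appendix opens; folding $a=\gamma$ into the case $a=1$ for the parameter $\gamma^{-1}$ via Lemma \ref{cdu_symm_lemma_1r} is a genuine small economy over the paper, which uses that reduction only for $\gamma=2^{-1}$; and viewing the values of $_cD_aF$ on $P_a$ as lines in $c$ with pairwise distinct slopes, so that the solution count in $P_a$ is a concurrency count, is a clean reformulation of the paper's tables of pairwise equalities. The identification of the exceptional case through $F(x)=-x$ on $\F_5$ is correct and does produce the right pairs $(a,b)$.

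The gap is that the decisive step is asserted rather than performed. The entire content of the lemma is that no four of these lines are concurrent outside the listed cases, and you only write that you ``expect no admissible $\gamma$ to survive.'' That expectation is not automatic: for $a=\gamma-1$ several triple coincidences genuinely occur for infinitely many $p^n$ (e.g.\ $c=1$ with $\gamma^2-3\gamma+1=0$ whenever $\chi(5)=1$, or $c=1$ with $2\gamma^2-2\gamma+1=0$ whenever $\chi(-1)=1$), so ruling out a fourth coincidence requires combining the resulting conditions case by case --- this is precisely the paper's Appendix \ref{proof_panot6_lemma}, and it is where the answer $p=5$, $\gamma=-1$, $c=1$ actually emerges. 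Moreover, your hand-check of the degenerate layer $\gamma=-1$ only gives the ``if'' direction: the observation that $F$ agrees with $x\mapsto-x$ exactly on $\{0,\pm1\}\cup\{x:x^2=-1\}$, hence everywhere only over $\F_5$, shows that $\F_5$ yields $\#P_a$ solutions, but it does not show that for $\gamma=-1$ and $p^n>5$ four points of $P_a$ cannot share a value; that still needs the concurrency computation for the four lines $-1$, $2^{-1}+c$, $-c$, $1+2^{-1}c$, which forces $c=-\tfrac32=1$ and hence $p=5$. So the plan is the right one, but the part you defer is the proof.
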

\begin{proof}
We investigate all possible cases, and details are in Appendix \ref{proof_panot6_lemma}.
\end{proof}

In Lemma \ref{cdu_bound_lemma} we show that the $c$-differential uniformity of $F$ is lower bounded by $3$ for all $c\in \Fpnmul$ when $\gamma=0$. But the method used in the proof of Lemma \ref{cdu_bound_lemma} is not applicable when $\gamma\ne 0$, and it is not easy to show a lower bound of the $c$-differential uniformity of $F$. But we confirm that $_c\Delta_F \ge 3$ when $7< p^n <500$  by using SageMath.

\subsection{The differential uniformity}

In this subsection we investigate the differential uniformity of $F$, that is, the case $c=1$. We first investigate the number of solutions of $D_aF(x)=b$ in $P_a$ when $\#P_a=6$.

\begin{lemma}\label{du_pa6_lemma} Let $a\in\Fpn\setminus \{0,\pm1,\pm\gamma,\pm(\gamma-1)\}$ and $b\in\Fpn$. Then, $D_aF(x)=b$ has four solutions in $P_a$ if and only if $\gamma=-1$ and at least one of the followings is satisfied:\\
(i) $p^n\equiv \pm1 \pmod{5}$, $a^4-3a^2+1=0$ and $b=a^{-1}$,\\
(ii) $p^n\equiv \pm1 \pmod{8}$, $a^2=2$ and $b=a$.\\
Otherwise, $D_aF(x)=b$ has at most three solutions.
\end{lemma}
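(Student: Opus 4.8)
The plan is to reduce the count of solutions of $D_aF(x)=b$ inside $P_a$ to the coincidence pattern of six explicit values, exploiting that under the hypothesis $\#P_a=6$ (that is, $a\notin\{0,\pm1,\pm\gamma,\pm(\gamma-1)\}$) the function $F$ agrees with $Inv$ at every argument we must evaluate. First I would compute $D_aF$ at each point of $P_a=\{0,-a,1,1-a,\gamma,\gamma-a\}$. Using $F(0)=0$ together with the oddness of $Inv$ one obtains the key identity
\begin{equation*}
D_aF(0)=D_aF(-a)=a^{-1},
\end{equation*}
while $D_aF(1)=\frac{1}{1+a}-\frac1\gamma$, $D_aF(1-a)=\frac1\gamma-\frac{1}{1-a}$, $D_aF(\gamma)=\frac{1}{\gamma+a}-1$, and $D_aF(\gamma-a)=1-\frac{1}{\gamma-a}$. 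The identity $D_aF(0)=D_aF(-a)=a^{-1}$ is the engine of the whole argument: the equation $D_aF(x)=a^{-1}$ always has the two solutions $x=0,-a$, so a configuration of four solutions can arise in only two ways, which I treat separately.

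Since the pair $\{0,-a\}$ sits at the value $a^{-1}$, four solutions in $P_a$ occur iff either (a) $b=a^{-1}$ and exactly two of the four values $D_aF(1),D_aF(1-a),D_aF(\gamma),D_aF(\gamma-a)$ also equal $a^{-1}$, or (b) $b\ne a^{-1}$ and all four of these values coincide. For scenario (a) I would clear denominators in the four equations $D_aF(1)=a^{-1}$, $D_aF(1-a)=a^{-1}$, $D_aF(\gamma)=a^{-1}$, $D_aF(\gamma-a)=a^{-1}$ and reduce them to the quadratics $a^2+a+\gamma=0$, $a^2-a+\gamma=0$, $a^2+\gamma a+\gamma=0$, $a^2-\gamma a+\gamma=0$ respectively. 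Subtracting these in pairs is decisive: the pairs $(1,2)$ and $(3,4)$ force $a=0$, the pairs $(1,3)$ and $(2,4)$ force $\gamma=1$, and only $(1,4)$ and $(2,3)$ survive, each forcing $\gamma=-1$ and yielding $a^2+a-1=0$ resp. $a^2-a-1=0$; together these are $a^4-3a^2+1=0$, with $b=a^{-1}$. By Lemma \ref{quad_lemma_odd} such $a$ exist precisely when $5$ is a nonzero square, i.e. $\chi(5)=1$, i.e. $p^n\equiv\pm1\pmod5$, giving (i).

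For scenario (b) I would impose $D_aF(1)=D_aF(1-a)$ and $D_aF(\gamma)=D_aF(\gamma-a)$; after clearing denominators these become $\frac{2}{1-a^2}=\frac2\gamma$ and $\frac{2\gamma}{\gamma^2-a^2}=2$, i.e. $\gamma=1-a^2$ and $a^2=\gamma^2-\gamma$. Eliminating $a^2$ gives $\gamma^2=1$, hence $\gamma=-1$ (since $\gamma\ne1$), and then $a^2=2$; a direct check using $a(1+a)=a+a^2=a+2$ gives $b=D_aF(1)=\frac{2+a}{1+a}=a$. Again by Lemma \ref{quad_lemma_odd} such $a$ exist precisely when $2$ is a square, i.e. $\chi(2)=1$, i.e. $p^n\equiv\pm1\pmod8$, giving (ii). In both scenarios I would close by checking sufficiency (that the displayed $(a,b)$ really produce four solutions, via Lemma \ref{panot6_lemma}'s point-counting together with distinctness of the four solution points) and that the relevant roots lie outside the excluded set $\{\pm1,\pm\gamma,\pm(\gamma-1)\}$: for (i) this uses $p\ne5$, guaranteed by $p^n\equiv\pm1\pmod5$, and for (ii) that $a^2=2$ is neither $1$ nor $4$.

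The main obstacle is the bookkeeping for general $\gamma$: one must verify that the split into (a) and (b) is exhaustive and that five or six coincidences are impossible. Here the identity $D_aF(0)=D_aF(-a)=a^{-1}$ again does the heavy lifting. The pairwise subtractions above show that at most one of the equations in $\{1,2\}$ and at most one of those in $\{3,4\}$ can hold, and the only surviving cross-pairs force $\gamma=-1$; hence for $\gamma\ne-1$ at most three points can share the value $a^{-1}$ and (by the $\gamma^2=1$ computation) the four non-origin values cannot all coincide, so the count is at most three. For $\gamma=-1$ the six points collapse into the three fixed pairs $\{0,-a\}$, $\{1,\gamma-a\}$, $\{1-a,\gamma\}$, and one checks that no three of these can merge simultaneously, which rules out five or six solutions. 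It is worth remarking that $\gamma=-1$ is exactly the self-dual case $\gamma=\gamma^{-1}$ of Lemma \ref{cdu_symm_lemma_1r}, which is the conceptual reason the extra coincidences appear only there.
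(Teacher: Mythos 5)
Your proposal is correct and follows essentially the same route as the paper's proof: the same six evaluations, the same key identity $D_aF(0)=D_aF(-a)=a^{-1}$, the same split into the cases $b=a^{-1}$ (two further coincidences among the four quadratics, surviving only for $\gamma=-1$) and $b\ne a^{-1}$ (all four remaining values equal, again forcing $\gamma=-1$ and $a^2=2$). The only immaterial difference is which pair of equalities you use to extract $\gamma=-1$ in the second case, and your handling of the $p=5$ exclusion and the at-most-three bound matches the paper's.
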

\begin{proof} $D_aF(x)$ for each $x\in P_a$ is as follows:
\begin{equation*}
\begin{array}{ll}
D_aF(0)=F(a)-F(0)=a^{-1}, & D_aF(-a)=F(0)-F(-a)=a^{-1},\\
D_aF(1)=F(1+a)-F(1)=(a+1)^{-1}-\gamma^{-1}, & D_aF(1-a)=F(1)-F(1-a)=\gamma^{-1}-(1-a)^{-1},\\
D_aF(\gamma)=F(\gamma+a)-F(\gamma)=(a+\gamma)^{-1}-1, & D_aF(\gamma-a)=F(\gamma)-F(\gamma-a)=1+(a-\gamma)^{-1}.
\end{array}
\end{equation*}
We first assume that $b=D_aF(0)=D_aF(-a)=a^{-1}$. We investigate $D_aF(x)=a^{-1}$ has at least three solutions in $P_a$. First we investigate equivalent condition for each case that $D_aF(x)=a^{-1}$ has at least two solutions in $P_a$.
\begin{equation*}
\begin{array}{ll}
a^{-1}=D_aF(1)\ \Leftrightarrow\ a^2+a+\gamma=0, & 
a^{-1}=D_aF(1-a)\ \Leftrightarrow\ a^2-a+\gamma=0,\\
a^{-1}=D_aF(\gamma)\ \Leftrightarrow\ a^2+\gamma a+\gamma=0, & 
a^{-1}=D_aF(\gamma-a)\ \Leftrightarrow\ a^2-\gamma a+\gamma=0.
\end{array}
\end{equation*}
We can easily see that if $a^{-1}=D_aF(1)=D_aF(1-a)$ or $a^{-1}=D_aF(\gamma)=D_aF(\gamma-a)$ then we have $a=0$, a contradiction. Hence $D_aF(x)=a^{-1}$ has at most four solutions. If $a^{-1}=D_aF(1)=D_aF(\gamma-a)$  then we have $\gamma=-1$ and hence $a^2+a-1=0$, such $a$ exists if and only if $1^2-4\cdot(-1)=5$ is a square which is equivalent to $p^n\equiv 0,\pm 1\pmod{5}$. But if $p=5$ then we have $a=2=1-\gamma$ which contradicts to assumption, and hence we have $p\ne 5$ . Similarly, if $a^{-1}=D_aF(\gamma)=D_aF(1-a)$ we also have $\gamma=-1$ and hence $a^2-a-1=0$, such $a$ exists if and only if $(-1)^2-4\cdot(-1)=5$ is a square, and we have $p^n\equiv \pm 1 \pmod{5}$. Hence, $D_aF(x)=a^{-1}$ has four solutions if and only if $p^n\equiv \pm 1 \pmod{5}$, $\gamma=-1$ and $0=(a^2+a-1)(a^2-a-1)=a^4-3a^2+1$. Otherwise, $D_aF(x)=a^{-1}$ has at most three solutions. Conversely, if $p\equiv \pm 1\pmod{5}$ then there exists $a\in\Fpn$ with $a^2+a-1=0$ or $a^2-a-1=0$ and then by the above analysis we can see that $D_aF(x)=a^{-1}$ has four solutions in $P_a$ when $\gamma=-1$.

Next we investigate the case $b\ne a^{-1}$. In this case, $D_aF(x)=b$ has four solutions in $P_a$ if and only if $b=D_aF(1)= D_aF(\gamma)=D_aF(1-a)= D_aF(\gamma-a)$. First we investigate equivalent condition for each case that $D_aF(x)=b$ has at least two solutions in $\{1,\gamma,1-a,\gamma-a\}$.
\begin{equation*}
\begin{array}{ll}
D_aF(1)= D_aF(\gamma) \Leftrightarrow a^2+(\gamma+1) a+2\gamma=0, & 
D_aF(1)= D_aF(\gamma-a) \Leftrightarrow a(\gamma+1)(a-\gamma+1) a=0,\\
D_aF(1)= D_aF(1-a) \Leftrightarrow a^2+\gamma-1=0, &
D_aF(\gamma)= D_aF(1-a) \Leftrightarrow a(\gamma+1)(a+\gamma-1) a=0, \\ 
D_aF(\gamma)= D_aF(\gamma-a) \Leftrightarrow  a^2+\gamma(1-\gamma)=0, &
D_aF(1-a)= D_aF(\gamma-a) \Leftrightarrow a^2-(\gamma+1) a+2\gamma=0.
\end{array}
\end{equation*}
From $a^2+(\gamma+1) a+2\gamma=0$ and $a^2-(\gamma+1) a+2\gamma=0$ we have $\gamma=-1$. Then we have $a^2=2$, such $a$ exists if and only if $p^n\equiv \pm 1\pmod{8}$. Then we have $b=D_aF(1)=\frac{1}{a+1}+1=\frac{a+2}{a+1}=\frac{a^2+a-2}{a^2-1}=a$. Conversely, if $p^n\equiv 1\pmod{8}$ then there exists $a\in\Fpn$ with $a^2=2$ and then by the above analysis we can see that $D_aF(x)=a$ has four solutions in $P_a$, when $\gamma=-1$.
\end{proof}

Next we show the main theorem of this subsection.

\begin{theorem}\label{du_thm_1r} Let $F=Inv \circ (1,\gamma)$ with $\gamma\ne 0,1$. Then,\\
(i) $\Delta_F=7$ if and only if $\gamma=-1$, $p=5$ and $n$ is even.\\
(ii) $\Delta_F=6$ if and only if $\gamma=-1$, and $p^n\equiv 1 \pmod{8}$ or $p^n\equiv 1,4\pmod{15}$.\\
Otherwise, $\Delta_F\le 5$.
\end{theorem}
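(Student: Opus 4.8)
The plan is to classify the differential uniformity $\Delta_F$ of $F = Inv \circ (1,\gamma)$ by combining the count of solutions inside $P_a$ with the count in $\Fpn \setminus P_a$, exactly as the preparatory lemmas were designed to permit. The master decomposition is
\begin{equation*}
\Delta_F(a,b)=\#\{x\in P_a : D_aF(x)=b\}+\#\{x\in \Fpn \setminus P_a : D_aF(x)=b\},
\end{equation*}
where the second term is at most $2$ by Lemma~\ref{panot2_lemma_1r} (with $c=1$). First I would split on whether $\#P_a=6$ or $\#P_a<6$. For $a\notin\{\pm1,\pm\gamma,\pm(\gamma-1)\}$ Lemma~\ref{du_pa6_lemma} tells me the $P_a$-count is at most $4$, and equals $4$ only when $\gamma=-1$ under the two explicit arithmetic conditions (either $p^n\equiv\pm1\pmod 5$ with $a^4-3a^2+1=0$, $b=a^{-1}$, or $p^n\equiv\pm1\pmod 8$ with $a^2=2$, $b=a$). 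For $a\in\{\pm1,\pm\gamma,\pm(\gamma-1)\}$, Lemma~\ref{panot6_lemma} gives a $P_a$-count of at most $3$, except for the exceptional $p=5,\gamma=-1$ family where it reaches $4$ or $5$.

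The decisive observation is that $\Delta_F$ can exceed $5$ only when $\gamma=-1$: for $\gamma\ne-1$, every branch above caps the $P_a$-count at $3$ and the outside count at $2$, giving $\Delta_F\le5$. This immediately reduces all three assertions to the single case $\gamma=-1$, which I would treat by the symmetry $F \leftrightarrow Inv\circ(1,\gamma^{-1})$ from Lemma~\ref{cdu_symm_lemma_1r} (here $\gamma=\gamma^{-1}=-1$ is self-dual, so the symmetry only helps organize the $a$-values, not reduce $\gamma$). For part (i), I would handle $p=5,\gamma=-1$ directly: Lemma~\ref{panot6_lemma}(i) produces a genuine $5$-solution-in-$P_a$ event at $(a,b)\in\{(2,3),(3,2)\}$, and I must check that these five solutions force or forbid extra outside solutions; when $n$ is even the residue conditions $p^n\equiv1\pmod 8$ and $p^n\equiv\pm1\pmod 5$ combine to push the total to $7$, while for $n$ odd I would verify the count drops. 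For parts (ii) and the $\le5$ bound, I would take the $4$-in-$P_a$ configurations from Lemma~\ref{du_pa6_lemma} and Lemma~\ref{panot6_lemma}(ii) and ask precisely when an extra pair of outside solutions exists, i.e.\ when the discriminant $(ab+1-1)^2-4ab=(ab)^2-4ab=a^{-2}(\ldots)$ type quantity becomes a nonzero square and the five exclusion inequalities of Lemma~\ref{panot2_lemma_1r} all hold.

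Concretely for (ii), with $\gamma=-1$ I would evaluate the outside-count criterion at the two flagged $(a,b)$ families. When $a^2=2,b=a$ one computes $(ab+c-1)^2-4abc=a^4-4a^2=4-8=-4$ at $c=1$, so I must determine when $-4$ (equivalently $-1$) is a nonzero square and reconcile that with $p^n\equiv1\pmod 8$; when $a^4-3a^2+1=0,b=a^{-1}$ the analogous discriminant becomes an explicit function of $a$ whose quadratic character I would convert into a congruence on $p^n$ modulo $5$ (and via $\chi(-3)$ a condition mod $3$, yielding the combined $p^n\equiv1,4\pmod{15}$). The main obstacle I anticipate is bookkeeping: for each of the two $\gamma=-1$ families I must simultaneously confirm the nonzero-square condition and check that none of the five linear exclusion conditions in Lemma~\ref{panot2_lemma_1r} accidentally vanishes (which would delete an outside solution and lower the count), and then translate every quadratic-character statement into the clean modular form $p^n\equiv1\pmod8$ or $p^n\equiv1,4\pmod{15}$ using the character table recorded in Section~\ref{sec_pre}. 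I would close with the converse direction, exhibiting the explicit $a,b$ achieving $6$ or $7$ under each stated congruence, thereby proving the equivalences rather than merely the bounds.
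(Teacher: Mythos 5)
Your proposal follows essentially the same route as the paper: decompose $\Delta_F(a,b)$ into the $P_a$-count (bounded via Lemma~\ref{panot6_lemma} and Lemma~\ref{du_pa6_lemma}, forcing $\gamma=-1$ for any count above $3$) plus the outside count from Lemma~\ref{panot2_lemma_1r}, then evaluate the discriminant $(ab)^2-4ab$ and the linear exclusions at each flagged $(a,b)$ family and translate the quadratic-character conditions into congruences on $p^n$. The only point to tighten in execution is part (i): for $p=5$, $(a,b)\in\{(2,3),(3,2)\}$ the relevant discriminant is $(ab)^2-4ab=-3$, so the condition for two outside solutions is $5^n\equiv 1\pmod 3$, i.e.\ $n$ even, rather than the mod-$8$/mod-$5$ residues you cite.
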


\begin{proof} 
Using Lemma \ref{panot6_lemma} and Lemma \ref{du_pa6_lemma} we investigate all the cases that $D_aF(x)=b$ has at least four solutions, which requires $\gamma=-1$. By Lemma \ref{panot2_lemma_1r} with $c=1$ and $\gamma=-1$, $D_aF(x)=b$ has two solutions in $\Fpn\setminus P_a$ if and only if $(ab)^2-4ab$ is a nonzero square, $(b+1)a+b=ab+a+b\ne 0$ and $(1-b)a+b=a+b-ab\ne0$.
\begin{itemize}
\item If $p=5$ and $(a,b)\in\{(2,3),(3,2)\}$ then $D_aF(x)=b$ has five solutions in $P_a$. Then $(ab)^2-4ab=1^2-4\cdot 1 =-3$ is a square if and only if $5^n\equiv 1\pmod{3}$ which is equivalent to $n$ is even. And, we have $ab+a+b=1\ne0$ and $a+b-ab=-1\ne 0$. Hence we have $\Delta_F(a,b)=7$ if $n$ is even, $\Delta_F(a,b)=5$ if $n$ is odd.
\item If $p=5$ and $(a,b)\in\{(1,-1),(-1,1)\}$ then $D_aF(x)=b$ has four solutions in $P_a$. Then we have $(ab)^2-4ab=1^2+4\cdot 1 =5=0$ and hence we have $\Delta_F(a,b)\le 5$ in this case.
\item If $p^n\equiv \pm 1 \pmod{5}$ and $a^4-3a^2+1=(a^2+a-1)(a^2-a-1)=0$ then $D_a(x)=a^{-1}$ has four solutions in $P_a$. Then we require $(ab)^2-4ab=1^2-4\cdot 1 =-3$ is a nonzero square if and only if $p^n\equiv 1\pmod {3}$. If $a^2+a-1=0$ then $ab+a+b=a^{-1}(a^2+a+1)=2a^{-1}\ne 0$ and $a+b-ab=a^{-1}(a^2-a+1)=2a^{-1}(1-a)\ne0$ since $a\ne 1$. Similarly, if $a^2-a-1=0$ then $ab+a+b\ne 0$ and $a+b-ab\ne0$. Hence we have $\Delta_F(a,a^{-1})=6$ where $a^4-3a^2+1=0$,  when $p^n\equiv \pm 1\pmod{5}$ and $p^n\equiv 1\pmod{3}$, or equivalently $p^n\equiv 1,4\pmod{15}$.
\item If $p^n\equiv \pm 1\pmod{8}$ and $a^2=2$ then $D_a(x)=a$ has four solutions in $P_a$. Then we have $(ab)^2-4ab=a^4-4a^2 =4-8=-4$ is a square if and only if $p^n \equiv 1\pmod{4}$. Since $a\ne 1$ from $(\pm 1)^2=1\ne 2$, we have $ab+a+b=a^2+2a=2(a+1)\ne 0$ and $a+b-ab=2a-2=2(a-1)\ne0$. Hence we have $\Delta_F(a,a)=6$ where $a^2=2$, when $p^n\equiv \pm 1\pmod{8}$ and $p^n\equiv 1\pmod{4}$, or equivalently $p^n\equiv 1\pmod{8}$.
\end{itemize}
Otherwise, $D_aF(x)=b$ at most three solutions in $P_a$ by Lemma \ref{panot6_lemma} and Lemma \ref{du_pa6_lemma}, and hence we have $\Delta_F\le 5$ by Lemma \ref{panot2_lemma_1r}.
\end{proof}

\subsection{The $c$-differential uniformity with $c\ne 0,1$}

In the above subsection we investigate the differential uniformity of $Inv\circ (1,\gamma)$. In this subsection we investigate the $c$-differential uniformity of $Inv\circ (1,\gamma)$ where $c\ne 0,1$. First we investigate the number of solutions of $_cD_aF(x)=b$ in $P_a$ when $\#P_a=6$.

\begin{lemma}\label{pa6_lemma} Let $F=Inv\circ (1,\gamma)$ with $\gamma\ne 0,1$ and $c\ne 0,1$. Let $a\in\Fpn\setminus \{0,\pm1,\pm\gamma,\pm(\gamma-1)\}$ and $b\in\Fpn$. Then, $_cD_aF(x)=b$ has four solutions in $P_a$ if and only if $\gamma c^2+2(\gamma^2+\gamma+1)c+\gamma=0$, $c\ne -1$, $a=\frac{\gamma^2-c}{\gamma+c}$ and $b=\frac{1-c}{1+\gamma}$. Otherwise, $_cD_aF(x)=b$ has at most three solutions in $P_a$.
\end{lemma}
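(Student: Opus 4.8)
The plan is to mirror the proof of Lemma~\ref{du_pa6_lemma}, but to exploit that $c\neq 1$ forces the pair $\{0,-a\}$ to contribute at most one solution. First I would record the six values of ${}_cD_aF$ on $P_a=\{0,1,\gamma,-a,1-a,\gamma-a\}$. The hypothesis $a\notin\{0,\pm1,\pm\gamma,\pm(\gamma-1)\}$ guarantees that, for every $x\in P_a$, both arguments $x$ and $x+a$ stay off the two swapped points $\{1,\gamma\}$ except where intended, so $F$ acts as $Inv$ there while $F(1)=\gamma^{-1}$ and $F(\gamma)=1$. This gives
\[ {}_cD_aF(0)=a^{-1},\qquad {}_cD_aF(-a)=ca^{-1}, \]
\[ {}_cD_aF(1)=(1+a)^{-1}-c\gamma^{-1},\qquad {}_cD_aF(1-a)=\gamma^{-1}-c(1-a)^{-1}, \]
\[ {}_cD_aF(\gamma)=(\gamma+a)^{-1}-c,\qquad {}_cD_aF(\gamma-a)=1-c(\gamma-a)^{-1}. \]
Since $c\neq1$ gives $a^{-1}\neq ca^{-1}$, at most one of $x=0,-a$ can solve ${}_cD_aF(x)=b$; hence four solutions in $P_a$ force at least three among $\{1,1-a,\gamma,\gamma-a\}$.

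Next I would eliminate the mixed configurations, where one solution comes from $\{0,-a\}$ (so that $b=a^{-1}$ or $b=ca^{-1}$) and three come from $\{1,1-a,\gamma,\gamma-a\}$. The key is a pairwise incompatibility: imposing ${}_cD_aF(1)=a^{-1}$ and ${}_cD_aF(\gamma)=a^{-1}$ and subtracting the two resulting quadratics in $a$ leaves $ca(1-\gamma)=0$, which is impossible, and likewise ${}_cD_aF(1-a)=a^{-1}$ and ${}_cD_aF(\gamma-a)=a^{-1}$ are incompatible; the identical two incompatibilities persist verbatim with $a^{-1}$ replaced by $ca^{-1}$. Thus neither the point pair $\{1,\gamma\}$ nor the pair $\{1-a,\gamma-a\}$ can lie wholly in the solution set, and since every $3$-element subset of $\{1,1-a,\gamma,\gamma-a\}$ contains one of these two forbidden pairs, three of these four points can never share the common value $a^{-1}$ (or $ca^{-1}$). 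This rules out every mixed configuration at one stroke, and the same observation (applied to all four points) also excludes five or more solutions.

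It then remains to treat the single surviving configuration ${}_cD_aF(1)={}_cD_aF(1-a)={}_cD_aF(\gamma)={}_cD_aF(\gamma-a)=b$. I would encode the equalities as quadratics in $a$: ${}_cD_aF(1)={}_cD_aF(1-a)$ gives $(1+c)a^2+\gamma(c-1)a+(1+c)(\gamma-1)=0$, while ${}_cD_aF(\gamma)={}_cD_aF(\gamma-a)$ gives $(1+c)a^2+(c-1)a+(1+c)\gamma(1-\gamma)=0$, and ${}_cD_aF(1)={}_cD_aF(\gamma)$ gives $ca^2+c(1+\gamma)a+\gamma(c+1)=0$. Subtracting the first two and cancelling the nonzero factor $\gamma-1$ yields the linear relation $(c-1)a+(1+c)(\gamma+1)=0$; this is exactly where $c\neq -1$ enters, since $c=-1$ would force $a=0$. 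Hence $c\neq-1$ and $a=\frac{(1+c)(\gamma+1)}{1-c}$. Substituting this value of $a$ into any one of the three quadratics and clearing denominators collapses, after dividing by the nonzero factor $1+c$, to the single relation $\gamma c^2+2(\gamma^2+\gamma+1)c+\gamma=0$; conversely, given this relation together with $a=\frac{(1+c)(\gamma+1)}{1-c}$, all three quadratics hold and the four values coincide. A short check shows $\frac{(1+c)(\gamma+1)}{1-c}=\frac{\gamma^2-c}{\gamma+c}$ on this variety (here $\gamma+c\neq0$, since $\gamma+c=0$ together with the relation and $c\neq1$ would force $\gamma=-1,c=1$), and substituting $a=\frac{\gamma^2-c}{\gamma+c}$ into $(1+a)^{-1}-c\gamma^{-1}$ gives $b=\frac{1-c}{1+\gamma}$, as claimed.

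I expect the elimination in the last paragraph to be the main obstacle: the three quadratic constraints in $a$ are a priori overdetermined, and the real content is that their simultaneous solvability is captured exactly by the stated degree-two relation in $c$, with the two forms of $a$ and the value of $b$ agreeing on the whole variety and with no spurious factors surviving after the cancellations by $\gamma-1$ and $1+c$. By contrast, the pairwise-incompatibility argument of the second paragraph is short and is precisely what renders the case split manageable, so I would carry it out first and reduce everything to the four-point configuration before starting the algebra.
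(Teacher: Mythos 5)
Your proposal is correct, and it shares the paper's overall skeleton: the same six values of ${}_cD_aF$ on $P_a$, the observation that $c\ne 1$ forces at most one solution in $\{0,-a\}$, the elimination of mixed configurations via the incompatible pairs $\{1,\gamma\}$ and $\{1-a,\gamma-a\}$ (the paper phrases this as four impossible triples, each collapsing to $\gamma=1$; your ``every $3$-subset contains a forbidden pair'' packaging is the same fact stated once), and the reduction to the single configuration ${}_cD_aF(1)={}_cD_aF(\gamma)={}_cD_aF(1-a)={}_cD_aF(\gamma-a)$. Where you genuinely diverge is the final elimination. The paper extracts $a=\frac{\gamma^2-c}{\gamma+c}$ from ${}_cD_aF(1)={}_cD_aF(\gamma-a)$, derives four \emph{cubic} equations in $c$ (equations \eqref{inv1r_eq1}--\eqref{inv1r_eq4}), and recovers $\gamma c^2+2(\gamma^2+\gamma+1)c+\gamma=0$ from linear combinations carrying a spurious factor $\gamma(\gamma^2-\gamma+1)$, which then requires a separate treatment of the subcase $\gamma^2-\gamma+1=0$. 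You instead subtract the two ``$(c+1)a^2+\cdots$'' quadratics, cancel $\gamma-1$, and obtain the linear relation $(c-1)a+(1+c)(\gamma+1)=0$; this yields $c\ne -1$ and $a=\frac{(1+c)(\gamma+1)}{1-c}$ immediately, and substituting back into either remaining quadratic collapses exactly (up to the harmless factors $1+c$ and $2$) to $\gamma c^2+2(\gamma^2+\gamma+1)c+\gamma=0$ with no extraneous factor, so the $\gamma^2-\gamma+1=0$ detour disappears. I checked the key identities: the difference of the two quadratics is $(\gamma-1)\bigl((c-1)a+(1+c)(\gamma+1)\bigr)$, the substitution into ${}_cD_aF(1)={}_cD_aF(\gamma)$ gives precisely the stated quadratic in $c$, and $(1+c)(\gamma+1)(\gamma+c)-(\gamma^2-c)(1-c)=\gamma c^2+2(\gamma^2+\gamma+1)c+\gamma$, so the two expressions for $a$ agree on the variety (with $\gamma+c\ne0$ as you argue). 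Your route is shorter and arguably more transparent than the paper's; the paper's cubics, on the other hand, are reused verbatim in its treatment of the remaining triple configurations, which is why it organizes the computation that way.
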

\begin{proof} $_cD_aF(x)$ for each $x\in P_a$ is as follows :
\begin{equation*}
\begin{array}{ll}
_cD_aF(0)=F(a)-cF(0)=a^{-1}, & _cD_aF(-a)=F(0)-cF(-a)=ca^{-1},\\
_cD_aF(1)=F(1+a)-cF(1)=(a+1)^{-1}-c\gamma^{-1}, & _cD_aF(1-a)=F(1)-cF(1-a)=\gamma^{-1}-c(1-a)^{-1},\\
_cD_aF(\gamma)=F(\gamma+a)-cF(\gamma)=(a+\gamma)^{-1}-c, & _cD_aF(\gamma-a)=F(\gamma)-cF(\gamma-a)=1+c(a-\gamma)^{-1}.
\end{array}
\end{equation*}
We investigate some possible cases to be used in this proof such that $_cD_aF(x)=b$ has at least two solutions in $P_a$ and we obtain an equation on $a$ in each case.
\begin{equation*}
\begin{array}{ll}
 _cD_aF(0)={}_cD_aF(1) \Leftrightarrow ca^2+ca+\gamma=0, &
 _cD_aF(0)={}_cD_aF(1-a) \Leftrightarrow a^2+(c\gamma-\gamma-1)a+\gamma=0,\\
 _cD_aF(0)={}_cD_aF(\gamma) \Leftrightarrow ca^2+c\gamma a+\gamma=0,&
 _cD_aF(0)={}_cD_aF(\gamma-a) \Leftrightarrow a^2+(c-\gamma-1)a+\gamma=0,\\
_cD_aF(-a)={}_cD_aF(1-a) \Leftrightarrow a^2 - a+c\gamma =0,&
 _cD_aF(1)={}_cD_aF(-a) \Leftrightarrow ca^2+(c\gamma+c-\gamma) a+c\gamma=0,\\
 _cD_aF(-a)={}_cD_aF(\gamma-a) \Leftrightarrow a^2 - \gamma a+c\gamma =0,&
 _cD_aF(\gamma)={}_cD_aF(-a) \Leftrightarrow c a^2+(c\gamma+c-1) a+c\gamma=0.
\end{array}
\end{equation*}
We investigate possible cases such that $_cD_aF(x)=b$ has at least four solutions in $P_a$. Since $c\ne 1$ we have $_cD_aF(0)\ne {}_cD_aF(-a)$, and hence $_cD_aF(x)=b$ has at most one solution in $\{0,-a\}$. We first investigate the cases that $_cD_aF(x)=b$ has one solution in $\{0,-a\}$.
\begin{itemize}
\item If $_cD_aF(0)={}_cD_aF(1)={}_cD_aF(\gamma)$ then $_cD_aF(0)={}_cD_aF(1)$ and $_cD_aF(0)={}_cD_aF(\gamma)$, and hence we have $ca^2+ca+\gamma=0$ and $ca^2+c\gamma a+\gamma=0$. Then we have $\gamma=1$, a contradiction, so this case cannot happen.
\item If $_cD_aF(0)={}_cD_aF(1-a)={}_cD_aF(\gamma-a)$ then $_cD_aF(0)={}_cD_aF(1-a)$ and $_cD_aF(0)={}_cD_aF(\gamma-a)$, and hence we have $a^2+(c\gamma-\gamma-1)a+\gamma=0$ and $a^2+(c-\gamma-1)a+\gamma=0$. Then we have $\gamma=1$, a contradiction, so this case cannot happen.
\item $_cD_aF(1)={}_cD_aF(\gamma)={}_cD_aF(-a)$ then $_cD_aF(1)={}_cD_aF(-a)$ and $_cD_aF(\gamma)={}_cD_aF(-a)$, and hence $ca^2+(c\gamma+c-\gamma) a+c\gamma=0$ and $c a^2+(c\gamma+c-1) a+c\gamma=0$. Then we have $\gamma=1$, a contradiction, so this case cannot happen.
\item If $_cD_aF(-a)={}_cD_aF(1-a)={}_cD_aF(\gamma-a)$ then $_cD_aF(-a)={}_cD_aF(1-a)$ and $_cD_aF(-a)={}_cD_aF(\gamma-a)$, and hence we have $a^2 - a+c\gamma =0$ and $a^2 - \gamma a+c\gamma =0$. Then we have $\gamma=1$, a contradiction, so this case cannot happen.
\end{itemize}
In this case, $_cD_aF(x)=b$ has at least four solutions in $P_a$ requires it has at least three solutions in $\{1,\gamma, 1-a, \gamma-a\}$. Then at least one case that is considered above should be satisfied, which is a contradiction. Hence the only possible case (that $_cD_aF(x)=b$ has four solutions) is $b={}_cD_aF(1)={}_cD_aF(\gamma)={}_cD_aF(1-a)={}_cD_aF(\gamma-a)$. We first investigate  all possible cases that $_cD_aF(x)=b$ has at least two solutions in $\{1,\gamma, 1-a, \gamma-a\}$.
\begin{itemize}
\item $ _cD_aF(1)={}_cD_aF(\gamma)\ \Leftrightarrow\ ca^2+c(\gamma+1) a+\gamma(c+1)=0$.
\item $ _cD_aF(1)={}_cD_aF(1-a)\ \Leftrightarrow\ (c+1)a^2+\gamma(c-1) a+(\gamma-1)(c+1)=0$.
\item $_cD_aF(1)={}_cD_aF(\gamma-a)\ \Leftrightarrow\ (c+\gamma)a^2+(c-\gamma^2) a=0$ and hence $a=\frac{\gamma^2-c}{\gamma+c}$ when $c\ne -\gamma$.
\item $_cD_aF(\gamma)={}_cD_aF(1-a)\ \Leftrightarrow\ (c\gamma+1) a^2+(c\gamma^2-1) a=0$ and hence $a=\frac{1-c\gamma^2}{1+c\gamma}$ when $c\ne -\gamma^{-1}$.
\item $_cD_aF(\gamma)={}_cD_aF(\gamma-a)\ \Leftrightarrow\ (c+1) a^2+(c-1) a+\gamma(1-\gamma)(c+1)=0$
\end{itemize}
If $c=-1$, then $_cD_aF(1)={}_cD_aF(1-a)$ and $_cD_aF(\gamma)={}_cD_aF(\gamma-a)$ imply $a=0$, a contradiction. Hence we require $c\ne -1$. Next we investigate all possible cases that $_cD_aF(x)=b$ has at least three solutions in $\{1,\gamma, 1-a, \gamma-a\}$.
\begin{itemize}
\item If $_cD_aF(1)={}_cD_aF(\gamma)={}_cD_aF(1-a)$ then we have $ca^2+c(\gamma+1) a+\gamma(c+1)=0$, $(c+1)a^2+\gamma(c-1) a+(\gamma-1)(c+1)=0$ and $(c\gamma+1) a^2+(c\gamma^2-1) a=0$. We substitute $ca^2=-c(\gamma+1) a-\gamma(c+1)$ which is obtained from the first equation to the second equation to have $a^2-(c+\gamma)a-(c+1)=0$. We substitute $a=\frac{1-c\gamma^2}{c\gamma+1}$ to $a^2-(c+\gamma)a-(c+1)=0$ to have $(\gamma^3-\gamma^2)c^3+(2\gamma^4-3\gamma)c^2+(\gamma^3-3\gamma^2-2\gamma-2)c-\gamma=0$ when $c\ne -\gamma^{-1}$. But if $c=-\gamma^{-1}$, then we have $0=(\gamma^3-\gamma^2)c^3+(2\gamma^4-3\gamma)c^2+(\gamma^3-3\gamma^2-2\gamma-2)c-\gamma=(\gamma+1)^2$ and then $\gamma=-1$ and $c=1$, a contradiction.
\item If $_cD_aF(1)={}_cD_aF(\gamma)={}_cD_aF(\gamma-a)$ then we have $ca^2+c(\gamma+1) a+\gamma(c+1)=0$, $(c+\gamma)a^2+(c-\gamma^2) a=0$ and $(c+1) a^2+(c-1) a+\gamma(1-\gamma)(c+1)=0$. We substitute $ca^2=-c(\gamma+1) a-\gamma(c+1)$ which is obtained from the first equation to the third equation to have $a^2-(c\gamma+1)a-\gamma^2(c+1)=0$. We substitute $a=\frac{\gamma^2-c}{\gamma+c}$ from the second equation to $a^2-(c\gamma+1)a-\gamma^2(c+1)=0$ to have $(\gamma^2-\gamma)c^3+(3\gamma^3-2)c^2+(2\gamma^4+2\gamma^3+3\gamma^2-\gamma)c+\gamma^3=0$ when $c\ne -\gamma$. But if $c=-\gamma$ then $0=(\gamma^2-\gamma)c^3+(3\gamma^3-2)c^2+(2\gamma^4+2\gamma^3+3\gamma^2-\gamma)c+\gamma^3=-\gamma^2(\gamma+1)^2$ which implies $\gamma=-1$ and $c=1$, a contradiction.
\item If $_cD_aF(1)={}_cD_aF(1-a)={}_cD_aF(\gamma-a)$ then we have $_{c^{-1}}D_{-a}F(1-(-a))={}_{c^{-1}}D_{-a}F(1-a-(-a))={}_{c^{-1}}D_{-a}F(\gamma-a-(-a))$ by Lemma \ref{cdu_symm_lemma}, or equivalently $_{c^{-1}}D_{a'}F(1)={}_{c^{-1}}D_{a'}F(\gamma)={}_{c^{-1}}D_{a'}F(1-a')$ where $a'=-a$. By the above analysis, we have $(\gamma^3-\gamma^2)(c^{-1})^3+(2\gamma^4-3\gamma)(c^{-1})^2+(\gamma^3-3\gamma^2-2\gamma-2)c^{-1}-\gamma=0$ or equivalently $\gamma c^3-(\gamma^3-3\gamma^2-2\gamma-2)c^2-(2\gamma^4-3\gamma)c-\gamma^3+\gamma^2=0$.
\item If $_cD_aF(\gamma)={}_cD_aF(1-a)={}_cD_aF(\gamma-a)$ then we have $_{c^{-1}}D_{a'}F(1)={}_{c^{-1}}D_{a'}F(\gamma)={}{c^{-1}}D_{a'}F(\gamma-a')$ where $a'=-a$, similarly with the above case. Then we have $(\gamma^2-\gamma)(c^{-1})^3+(3\gamma^3-2)(c^{-1})^2+(2\gamma^4+2\gamma^3+3\gamma^2-\gamma)c^{-1}+\gamma^3=0$ or equivalently $\gamma^3 c^3+(2\gamma^4+2\gamma^3+3\gamma^2-\gamma)c^2+(3\gamma^3-2)c+\gamma^2-\gamma=0$.
\end{itemize}
So we have the following four cubic equations on $c$.
\begin{align}
(\gamma^3-\gamma^2)c^3+(2\gamma^4-3\gamma)c^2+(\gamma^3-3\gamma^2-2\gamma-2)c-\gamma&=0, \label{inv1r_eq1}\\
(\gamma^2-\gamma)c^3+(3\gamma^3-2)c^2+(2\gamma^4+2\gamma^3+3\gamma^2-\gamma)c+\gamma^3&=0, \label{inv1r_eq2}\\
\gamma c^3-(\gamma^3-3\gamma^2-2\gamma-2)c^2-(2\gamma^4-3\gamma)c-\gamma^3+\gamma^2&=0, \label{inv1r_eq3}\\
\gamma^3 c^3+(2\gamma^4+2\gamma^3+3\gamma^2-\gamma)c^2+(3\gamma^3-2)c+\gamma^2-\gamma&=0.\label{inv1r_eq4}
\end{align}
Then $\eqref{inv1r_eq2}-(\gamma-1)\times\eqref{inv1r_eq3}$ is the following quadratic equation.
\begin{align}
 0&=(\gamma^4-\gamma^3+\gamma^2)c^2+2(\gamma^5+\gamma^3+\gamma)c+\gamma^4-\gamma^3+\gamma^2\notag \\
&=\gamma(\gamma^2-\gamma+1)\left( \gamma c^2+2(\gamma^2+\gamma+1)c+\gamma \right) \label{inv1r_eq23}
\end{align}
We also confirm that $-\eqref{inv1r_eq1}+\gamma \times\eqref{inv1r_eq2}$ and $\left(-\eqref{inv1r_eq3}\times \gamma^2+\eqref{inv1r_eq4}\right)/(\gamma-1)$ are same with $\eqref{inv1r_eq23}\times (\gamma+1)/\gamma$. If $\gamma^2-\gamma+1=0$ then equations \eqref{inv1r_eq1}, \eqref{inv1r_eq2}, \eqref{inv1r_eq3} and \eqref{inv1r_eq4} imply $0=c^3+5c^2+5c+1=(c+1)(c^2+4c+1)$. Since $c\ne-1$, we have $c^2+4c+1=0$ and then $\gamma c^2+2(\gamma^2+\gamma+1)c+\gamma=2c(\gamma^2-\gamma +1)=0$ when $\gamma^2-\gamma+1=0$. Hence we only require that $\gamma c^2+2(\gamma^2+\gamma+1)c+\gamma=0$ whether $\gamma^2-\gamma+1=0$ or not. In this case we set $a=\frac{\gamma^2-c}{\gamma+c}$ and $b={}_cD_aF(1)=(a+1)^{-1}-c\gamma^{-1}=\frac{1-c}{1+\gamma}$.

Conversely, we assume that $\gamma c^2+2(\gamma^2+\gamma+1)c+\gamma=0$ with $c=-1$. If $c=-\gamma$ then we have $0=-\gamma^3-2\gamma^2-\gamma=-\gamma(\gamma+1)^2$ so $\gamma=-1$ and hence $c=-\gamma=1$ which contradicts to the assumption, so we have $c\ne-\gamma$. Similarly, we have $c\ne -\gamma^{-1}$. Since $(\gamma^2-c)(1+c\gamma)-(\gamma+c)(1-c\gamma^2)=(\gamma-1)\left(\gamma c^2+2(\gamma^2+\gamma+1)c+\gamma\right)=0$ we set $a=\frac{\gamma^2-c}{\gamma+c}=\frac{1-c\gamma^2}{1+c\gamma}$. By the above analysis, we have $_cD_aF(1)={}_cD_aF(\gamma-a)$ and $_cD_aF(\gamma)={}_cD_aF(1-a)$. Since $a=\frac{\gamma^2-c}{\gamma+c}$, we have $(c+1)a^2+\gamma(c-1) a+(\gamma-1)(c+1)=\frac{\gamma(\gamma-1)\left(\gamma c^2+2(\gamma^2+\gamma+1)c+\gamma\right)}{(c+\gamma)^2}=0$ and hence $_cD_aF(1)={}_cD_aF(1-a)$. Hence $_cD_aF(x)=b$ has four solutions in $P_a$, which completes the proof.
\end{proof}

Next we show the main theorem of this subsection.

\begin{theorem}\label{cdu_thm_1r} Let $F=Inv\circ (1,\gamma)$ with $\gamma\ne 0,1$ and $c\ne 0,1$. Then $_c\Delta_F =6$ if and only if $\gamma c^2+2(\gamma^2+\gamma+1)c+\gamma=0$, $c\ne-1$, and $-c$ is a square. Otherwise, $_c\Delta_F \le 5$.
\end{theorem}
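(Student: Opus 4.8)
The plan is to exploit the decomposition
\[
_c\Delta_F(a,b)=\#\{x\in P_a : {}_cD_aF(x)=b\}+\#\{x\in \Fpn \setminus P_a : {}_cD_aF(x)=b\}
\]
together with the three preceding lemmas. First I would record the upper bound. Since $c\ne 1$, the exceptional four- and five-solution cases of Lemma \ref{panot6_lemma} (all of which require $c=1$) cannot occur, so the first summand is at most $3$ whenever $\#P_a<6$; by Lemma \ref{pa6_lemma} it is at most $4$ when $\#P_a=6$. As Lemma \ref{panot2_lemma_1r} bounds the second summand by $2$, we get $_c\Delta_F\le 6$ at once, and the value $6$ can arise only from an $(a,b)$ realizing $4$ solutions in $P_a$ together with $2$ solutions outside $P_a$.

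For the necessity direction I would note that four solutions in $P_a$ force $\#P_a=6$ (otherwise at most three occur), so Lemma \ref{pa6_lemma} applies and pins down the defining relation $\gamma c^2+2(\gamma^2+\gamma+1)c+\gamma=0$, the constraint $c\ne-1$, and the explicit values $a=\frac{\gamma^2-c}{\gamma+c}$, $b=\frac{1-c}{1+\gamma}$. Crucially this pair $(a,b)$ is the \emph{unique} candidate for attaining $6$; hence the whole question reduces to deciding, for exactly this $a$ and $b$, whether $_cD_aF(x)=b$ picks up two further solutions in $\Fpn\setminus P_a$ via Lemma \ref{panot2_lemma_1r}.

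The computational heart is to substitute $a=\frac{\gamma^2-c}{\gamma+c}$, $b=\frac{1-c}{1+\gamma}$ into the five conditions of Lemma \ref{panot2_lemma_1r}. Here $b\ne 0$ is immediate from $c\ne 1$, and a short simplification (using only these formulas for $a,b$, not the defining relation) collapses the four non-vanishing conditions into the two expressions $\frac{c(\gamma-1)(\gamma+1)}{\gamma+c}$ and $\frac{(\gamma-1)(\gamma+c)}{1+\gamma}$, each nonzero because $c\ne 0$ and $\gamma\ne 1$ by hypothesis while $\gamma\ne-1$ and $\gamma+c\ne 0$ follow from the defining relation exactly as in the converse part of Lemma \ref{pa6_lemma}. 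The key identity, which \emph{does} use $\gamma c^2+2(\gamma^2+\gamma+1)c+\gamma=0$, is that the discriminant collapses to
\[
(ab+c-1)^2-4abc=-4c.
\]
Since $-4c\ne 0$ and $\chi(-4c)=\chi(-c)$, this is a nonzero square precisely when $-c$ is a square. Thus for this $(a,b)$ the equation gains two solutions outside $P_a$ iff $-c$ is a square, giving $_c\Delta_F(a,b)=6$; otherwise it gains none and the pair contributes only $4$. Combining this with the uniqueness of $(a,b)$ and the bound $_c\Delta_F\le 6$ yields the stated biconditional, and shows $_c\Delta_F\le 5$ in every remaining case.

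I expect the main obstacle to be the polynomial identity $(ab+c-1)^2-4abc=-4c$: after clearing denominators it is only an identity modulo the ideal generated by $\gamma c^2+2(\gamma^2+\gamma+1)c+\gamma$, so the reduction must be carried out carefully, e.g.\ by repeatedly replacing $\gamma c^2$ with $-2(\gamma^2+\gamma+1)c-\gamma$. A secondary point demanding care is verifying that the prescribed $a=\frac{\gamma^2-c}{\gamma+c}$ genuinely satisfies $\#P_a=6$, that is $a\notin\{\pm1,\pm\gamma,\pm(\gamma-1)\}$, so that Lemma \ref{pa6_lemma} is legitimately invoked in the sufficiency direction.
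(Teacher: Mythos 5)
Your proposal is correct and follows essentially the same route as the paper: bound the in-$P_a$ count by Lemmas \ref{panot6_lemma} and \ref{pa6_lemma} (the exceptional cases needing $c=1$), reduce to the unique candidate pair $a=\frac{\gamma^2-c}{\gamma+c}$, $b=\frac{1-c}{1+\gamma}$ under $\gamma c^2+2(\gamma^2+\gamma+1)c+\gamma=0$, and then apply Lemma \ref{panot2_lemma_1r}, where the discriminant collapses to $-4c$ and the four non-vanishing conditions reduce to the same two nonzero expressions you identify. Your closing caveats (reducing modulo the defining relation when verifying $(ab+c-1)^2-4abc=-4c$, and checking that this $a$ really yields $\#P_a=6$) are exactly the points where the computation must be done carefully; the paper handles the first explicitly and leaves the second implicit.
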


\begin{proof} By Lemma \ref{panot6_lemma} and Lemma \ref{pa6_lemma}, $_cD_aF(x)=b$ has four solutions in $P_a$ if and only if 
\begin{equation}\label{cdu_1r_cond}
\gamma c^2+2(\gamma^2+\gamma+1)c+\gamma=0,\ c\ne-1,\ a=\frac{\gamma^2-c}{\gamma+c},\ b=\frac{1-c}{1+\gamma},
\end{equation}
otherwise  $_cD_aF(x)=b$ has at most three solutions in $P_a$. Hence it remains to investigate required conditions that $_cD_aF(x)=b$ has two solutions in $\Fpn\setminus P_a$ using Lemma \ref{panot2_lemma_1r}, under the assumption as in \eqref{cdu_1r_cond}. We require that
\begin{align*}
(ab+c-1)^2-4abc&=\frac{\gamma^2c^4+4(\gamma^4-\gamma-1)c^3+2(2\gamma^2-\gamma^2+2)c^2-4(\gamma^4+\gamma^3-\gamma)c+\gamma^2}{(\gamma+1)^2(c+\gamma)^2}\\
&=\frac{\left(\gamma c^2+2(\gamma^2+\gamma+1)c+\gamma\right)^2-4c(\gamma+1)^2(c+\gamma)^2}{(\gamma+1)^2(c+\gamma)^2}=-4c
\end{align*}
is a square. We can see that $c\ne -\gamma$ in the proof of Lemma \ref{pa6_lemma}. If $\gamma=-1$ then we have $0=\gamma c^2+2(\gamma^2+\gamma+1)c+\gamma=-(c-1)^2$, and then $c=1$ which is a contradiction, so we have $\gamma\ne -1$. Hence we get $(b+c)a+b+c-1=\frac{c(\gamma^2-1)}{c+\gamma}\ne 0$, $(1-b)a+b+c-1=\frac{(\gamma-1)(c+\gamma)}{\gamma+1}\ne0$, $(b\gamma+c)a+\gamma(b\gamma+c-1)\frac{(\gamma-1)(c+\gamma)}{\gamma+1}\ne0$ and 
$(1-b\gamma)a+\gamma(b\gamma+c-1)=\frac{c(\gamma^2-1)}{c+\gamma}\ne 0$. Therefore, under the assumption \eqref{cdu_1r_cond} $_cD_aF(x)=b$ has two solutions in $\Fpn\setminus P_a$ if and only if $-c$ is a square, by Lemma \ref{panot2_lemma_1r}.
\end{proof}

\section{Concluding Remark}

In this paper, we studied the $c$-differential
uniformity of the swapped inverse function $Inv \circ (1,\gamma)$.
In section \ref{sec_inv01}, we showed that the $c$-differential
uniformity of $Inv \circ (0,1)$ is at most $5$. In section
\ref{sec_inv1r}, we showed that the $c$-differential uniformity of
$Inv \circ (1,\gamma)$ is at most $6$ when $\gamma\ne 1$. For each
case (i.e., $c=1$ or $c\neq 1$), we gave a  detailed
classification of the ($c$-)differential uniformity of $Inv \circ
(1,\gamma)$, which has been unknown as far as we know.
Unfortunately we could not  find (A)PcN permutations except when
$p^n$ is very small. In future studies, we will continue to find
permutations $Inv \circ h $ with cyclic permutation $h$ having low
$c$-differential uniformities.

\bigskip
\noindent \textbf{Acknowledgements} : This work was supported by the National Research Foundation of Korea (NRF) grant
funded by the Korea government (MSIT) (No. 2021R1C1C2003888). Soonhak Kwon was supported by the National Research Foundation of Korea (NRF) grant funded by the Korea government (MSIT) (No. 2016R1A5A1008055, No. 2019R1F1A1058920 and 2021R1F1A1050721).

\begin{appendix}

\section{Proof of Lemma \ref{panot6_lemma}}\label{proof_panot6_lemma}

By Lemma \ref{cdu_symm_lemma}, we have $_c\Delta_F(-a,b)={}_{c^{-1}}\Delta_F(a,-bc^{-1})$. Hence it is enough to investigate the case $a\in\{1,\gamma,\gamma-1\}$.

\bigskip
\noindent If $a=1$ with $\gamma=-1$ then $P_a=\{0,\pm1, -2\}$ and we have $_cD_1F(0)=-1$, $_cD_1F(1)=2^{-1}+c$, $_cD_1F(-1)=-c$ and $_cD_1F(-2)=1+2^{-1}c$. 
\begin{equation*}
\begin{array}{ll}
_cD_1F(0)={}_cD_1F(1)\ \Leftrightarrow\ c=-\frac{3}{2}, &  _cD_1F(0)={}_cD_1F(-1)\ \Leftrightarrow\ c=1,\\
_cD_1F(0)={}_cD_1F(-2)\ \Leftrightarrow\ c=-4, & _cD_1F(1)={}_cD_1F(-1)\ \Leftrightarrow\ c=-\frac{1}{4},\\
_cD_1F(1)={}_cD_1F(-2)\ \Leftrightarrow\ c=1, &  _cD_1F(-1)={}_cD_1F(-2)\ \Leftrightarrow\ c=-\frac{2}{3}.
\end{array}
\end{equation*}
We investigate all possible cases such that $_cD_1F(x)=b$ has at least three solutions in $P_a$.
\begin{equation*}
\begin{array}{l}
_cD_1F(0)={}_cD_1F(1)={}_cD_1F(-1)\ \Leftrightarrow\ c=-\frac{3}{2}=1=-\frac{1}{4}\ \Rightarrow 2=-3,\ 6=1\text{ and }1=-4 \Rightarrow\ p=5.\\
_cD_1F(0)={}_cD_1F(1)={}_cD_1F(-2)\ \Leftrightarrow\ c=-\frac{3}{2}=-4=1\ \Rightarrow -8=-3,\ 2=-3\text{ and }1=-4 \Rightarrow\ p=5.\\
_cD_1F(0)={}_cD_1F(-1)={}_cD_1F(-2)\ \Leftrightarrow\ c=1=-4=-\frac{2}{3} \ \Rightarrow 1=-4,\ 2=-3\text{ and }-12=-2 \Rightarrow\ p=5.\\
_cD_1F(1)={}_cD_1F(-1)={}_cD_1F(-2)\ \Leftrightarrow\ c=-\frac{1}{4}=1=-4=-\frac{2}{3} \ \Rightarrow 1=-4=16 \Rightarrow\ p=5.
\end{array}
\end{equation*}
So we have $_cD_1F(0)={}_cD_1F(1)={}_cD_1F(-1)={}_cD_1F(-2)=-1$ when $c=1$ and $p=5$. 

\bigskip
\noindent  If $a=1$ with $\gamma=2$ then $P_a=\{0, \pm1, 2\}$. If $p=3$ then we have $P_a=\F_3$ and hence $_cD_aF(x)=b$ has at most three solutions in $P_a$. If $p\ne 3$, then we have $_cD_1F(0)=2^{-1}$, $_cD_1F(1)=1-2^{-1}c$, $_cD_1F(-1)=c$ and $_cD_1F(2)=3^{-1}-c$. We investigate equivalent condition for each case that $D_aF(x)=b$ has at least two solutions in $P_a$.
\begin{equation*}
\begin{array}{ll}
_cD_1F(0)={}_cD_1F(1)\ \Leftrightarrow\ c=1, &  _cD_1F(0)={}_cD_1F(-1)\ \Leftrightarrow\ c=2^{-1},\\
_cD_1F(0)={}_cD_1F(2)\ \Leftrightarrow\ c=-\frac{1}{6}, & _cD_1F(1)={}_cD_1F(-1)\ \Leftrightarrow\ c=\frac{2}{3},\\
_cD_1F(1)={}_cD_1F(2)\ \Leftrightarrow\ c=-\frac{4}{3}, &  _cD_1F(-1)={}_cD_1F(2)\ \Leftrightarrow\ c=\frac{1}{6}.
\end{array}
\end{equation*}
We investigate all possible cases such that $_cD_1F(x)=b$ has at least three solutions in $P_a$.
\begin{equation*}
\begin{array}{l}
_cD_1F(0)={}_cD_1F(1)={}_cD_1F(-1)\ \Leftrightarrow\ c=1=2^{-1}\ \Rightarrow 2=1\text{, a contradiction.}\\
_cD_1F(0)={}_cD_1F(1)={}_cD_1F(2)\ \Leftrightarrow\ c=1=-\frac{1}{6}=-\frac{4}{3}\ \Rightarrow 1=-6,\ 3=-4\text{ and }3=24 \Rightarrow\ p=7.\\
_cD_1F(0)={}_cD_1F(-1)={}_cD_1F(2)\ \Leftrightarrow\ c=\frac{1}{6}=-\frac{1}{6}\text{, a contradiction.}\\
_cD_1F(1)={}_cD_1F(-1)={}_cD_1F(2)\ \Leftrightarrow\ c=\frac{2}{3}=-\frac{4}{3}\ \Rightarrow 2=0\text{, a contradiction.}
\end{array}
\end{equation*}
So we can see that $_cD_1F(x)=b$ has at most three solutions.

\bigskip
\noindent If $a=1$ with $\gamma\ne -1,2$ then $P_a=\{0,1,\gamma, -1,\gamma-1\}$ and we have $_cD_{1}F(0)=\gamma^{-1}$, $_cD_{1}F(1)=2^{-1}-c\gamma^{-1}$, $_cD_{1}F(\gamma)=(\gamma+1)^{-1}-c$, $_cD_{1}F(-1)=c$ and $_cD_{1}F(\gamma-1)=1-c(\gamma-1)^{-1}$. We investigate equivalent condition for each case that $D_aF(x)=b$ has at least two solutions in $P_a$.
\begin{equation*}
\begin{array}{ll}
_cD_1F(0)={}_cD_1F(1)\ \Leftrightarrow\ c=\frac{\gamma-2}{2}, &  
_cD_1F(0)={}_cD_1F(\gamma)\ \Leftrightarrow\ c=-\frac{1}{\gamma(\gamma+1)},\\
_cD_1F(0)={}_cD_1F(-1)\ \Leftrightarrow\ c=\gamma^{-1}, & 
_cD_1F(0)={}_cD_1F(\gamma-1)\ \Leftrightarrow\ c=\frac{(\gamma-1)^2}{\gamma},\\
_cD_1F(1)={}_cD_1F(\gamma)\ \Leftrightarrow\ c=-\frac{\gamma}{2(\gamma+1)}, &  
_cD_1F(1)={}_cD_1F(-1)\ \Leftrightarrow\ c=\frac{2\gamma}{\gamma+1},  \\
_cD_1F(1)={}_cD_1F(\gamma-1)\ \Leftrightarrow\ c=\frac{\gamma(\gamma-1)}{2}, &  
_cD_1F(\gamma)={}_cD_1F(-1)\ \Leftrightarrow\ c=\frac{1}{2(\gamma+1)},\\
_cD_1F(\gamma)={}_cD_1F(\gamma-1)\ \Leftrightarrow\ c=-\frac{\gamma(\gamma-1)}{(\gamma+1)(\gamma-2)},  &  
_cD_1F(-1)={}_cD_1F(\gamma-1)\ \Leftrightarrow\ c=\frac{\gamma-1}{\gamma}. 
\end{array}
\end{equation*}
Next, we can see that $_cD_1F(x)=b$ cannot have four solutions in $P_a$ by showing all possible cases cannot happen.
\begin{itemize}
\item $_cD_1 F(0)={}_cD_1 F(1)={}_cD_1 F(-1)\ \Leftrightarrow\ c=\frac{\gamma-2}{2}=\gamma^{-1}=\frac{2\gamma}{\gamma+1}$. Then we have $\gamma^2-2\gamma-2=2\gamma^2-\gamma-1=\gamma^2-5\gamma-2=0$ and then $\gamma=0$, a contradiction. Hence $_cD_1 F(0)={}_cD_1 F(1)={}_cD_1 F(\gamma)={}_cD_1 F(-1)$ and $_cD_1 F(0)={}_cD_1 F(1)={}_cD_1 F(-1)={}_cD_1 F(\gamma-1)$ cannot happen. 
\item $_cD_1 F(1)={}_cD_1 F(\gamma)={}_cD_1 F(-1)\ \Leftrightarrow\ c=-\frac{\gamma}{2(\gamma+1)}=\frac{2\gamma}{\gamma+1}=\frac{1}{2(\gamma+1)}$. This case cannot happen, since  $-\frac{\gamma}{2(\gamma+1)}=\frac{1}{2(\gamma+1)}$ implies $\frac{1}{2}=0$, a contradiction. So $_cD_1 F(0)={}_cD_1 F(\gamma)={}_cD_1 F(-1)={}_cD_1 F(\gamma-1)$ is also impossible.
\item $_cD_1 F(1)={}_cD_1 F(\gamma)={}_cD_1 F(\gamma-1)\ \Leftrightarrow\ c=-\frac{\gamma}{2(\gamma+1)}=\frac{\gamma(\gamma-1)}{2}=-\frac{\gamma(\gamma-1)}{(\gamma+1)(\gamma-2)}$. This case cannot happen, since  $-\frac{\gamma}{2(\gamma+1)}=\frac{\gamma(\gamma-1)}{2}$ implies $\gamma=0$, a contradiction. Hence $_cD_1 F(0)={}_cD_1 F(1)={}_cD_1 F(\gamma)={}_cD_1 F(\gamma-1)$ and $_cD_1 F(1)={}_cD_1 F(\gamma)={}_cD_1 F(-1)={}_cD_1 F(\gamma-1)$ cannot happen, neither.
\end{itemize}

\noindent If $a=\gamma$ with $\gamma= -1$, then we have $_cD_{-1}F(1)={}_cD_{-1}F(2)={}_cD_{-1}F(0)={}_cD_{-1}F(-1)=1$ when $c=1$ and $p=5$ by applying Lemma \ref{cdu_symm_lemma} for the already investigated case $a=1$ and $\gamma=-1$. If $a=\gamma$ with $\gamma= 2^{-1}$ then $_cD_{2^{-1}}F_1(x)=b$ and $_cD_{1}F_2(x)=b$ have the same number of solutions in $P_a$ where $F_1=Inv \circ (1,2^{-1})$ and $F_2 = Inv \circ (1,2)$, by applying $\gamma=2$ in Lemma \ref{cdu_symm_lemma_1r}. Hence $_cD_{2^{-1}}F_1(x)=b$ has at most three solutions in $P_a$. If $a=\gamma$ with $\gamma\ne -1, 2^{-1}$ then $P_a=\{0,1,\gamma,-\gamma,1-\gamma\}$, and we have $_cD_{\gamma}F(0)=1$, $_cD_{\gamma}F(1)=(\gamma+1)^{-1}-c\gamma^{-1}$, $_cD_{\gamma}F(\gamma)=(2\gamma)^{-1}-c$, $_cD_{\gamma}F(-\gamma)=c\gamma^{-1}$ and $_cD_{\gamma}F(1-\gamma)=\gamma^{-1}+c(\gamma-1)^{-1}$. We investigate equivalent condition for each case that $D_aF(x)=b$ has at least two solutions in $P_a$.
\begin{equation*}
\begin{array}{ll}
_cD_{\gamma}F(0)={}_cD_{\gamma}F(1)\ \Leftrightarrow\ c=-\frac{\gamma^2}{\gamma+1}, &  
_cD_{\gamma}F(0)={}_cD_{\gamma}F(\gamma)\ \Leftrightarrow\ c=-\frac{2\gamma-1}{2\gamma},\\
_cD_{\gamma}F(0)={}_cD_{\gamma}F(-\gamma)\ \Leftrightarrow\ c=\gamma, & 
_cD_{\gamma}F(0)={}_cD_{\gamma}F(1-\gamma)\ \Leftrightarrow\ c=\frac{(\gamma-1)^2}{\gamma},\\
_cD_{\gamma}F(1)={}_cD_{\gamma}F(\gamma)\ \Leftrightarrow\ c=-\frac{1}{2(\gamma+1)}, &  
_cD_{\gamma}F(1)={}_cD_{\gamma}F(-\gamma)\ \Leftrightarrow\ c=\frac{\gamma}{2(\gamma+1)}, \\
_cD_{\gamma}F(1)={}_cD_{\gamma}F(1-\gamma)\ \Leftrightarrow\ c=-\frac{\gamma-1}{(\gamma+1)(2\gamma-1)}, &  
_cD_{\gamma}F(\gamma)={}_cD_{\gamma}F(-\gamma)\ \Leftrightarrow\ c=\frac{1}{2(\gamma+1)},\\
_cD_{\gamma}F(\gamma)={}_cD_{\gamma}F(1-\gamma)\ \Leftrightarrow\ c=-\frac{\gamma-1}{2\gamma^2},  &  
_cD_{\gamma}F(-\gamma)={}_cD_{\gamma}F(1-\gamma)\ \Leftrightarrow\ c=-\gamma+1. 
\end{array}
\end{equation*}
Next, we can see that $_cD_1F(x)=b$ cannot have four solutions in $P_a$ by showing all possible cases cannot happen.
\begin{itemize}
\item $_cD_{\gamma}F(0)={}_cD_{\gamma}F(-\gamma)={}_cD_{\gamma}F(1-\gamma)\ \Leftrightarrow\ c=\gamma =\frac{(\gamma-1)^2}{\gamma}=-\gamma+1$. Then we have $c=\gamma=2^{-1}$, a contradiction. Hence $_cD_{\gamma}F(0)={}_cD_{\gamma}F(1)={}_cD_{\gamma}F(-\gamma)={}_cD_{\gamma}F(1-\gamma)$ and $_cD_{\gamma}F(0)={}_cD_{\gamma}F(\gamma)={}_cD_{\gamma}F(-\gamma)={}_cD_{\gamma}F(1-\gamma)$ are also impossible.
\item $_cD_{\gamma}F(1)={}_cD_{\gamma}F(\gamma)={}_cD_{\gamma}F(-\gamma)\ \Leftrightarrow\ c=-\frac{1}{2(\gamma+1)}=\frac{\gamma}{2(\gamma+1)}=\frac{1}{2(\gamma+1)}$. This case cannot happen, because we have $\frac{1}{2}=0$ from $-\frac{1}{2(\gamma+1)}=\frac{\gamma}{2(\gamma+1)}$ a contradiction. Hence $_cD_{\gamma}F(0)={}_cD_{\gamma}F(1)={}_cD_{\gamma}F(\gamma)={}_cD_{\gamma}F(-\gamma)$ and  $_cD_{\gamma}F(1)={}_cD_{\gamma}F(\gamma)={}_cD_{\gamma}F(-\gamma)={}_cD_{\gamma}F(1-\gamma)$ are also impossible.
\item $_cD_{\gamma}F(1)={}_cD_{\gamma}F(\gamma)={}_cD_{\gamma}F(1-\gamma)\ \Leftrightarrow\ c=-\frac{1}{2(\gamma+1)}=-\frac{\gamma-1}{(\gamma+1)(2\gamma-1)}=-\frac{\gamma-1}{2\gamma^2}$. This case cannot happen, because we have $-1=0$ from $-\frac{1}{2(\gamma+1)}=-\frac{\gamma-1}{2\gamma^2}$, a contradiction. So, $_cD_{\gamma}F(0)={}_cD_{\gamma}F(1)={}_cD_{\gamma}F(\gamma)={}_cD_{\gamma}F(1-\gamma)$ is also impossible.
\end{itemize}

\noindent If $a=\gamma-1$ with $\gamma=2$ then we have $a=1$ which is already investigated. If $a=\gamma-1$ with $\gamma=2^{-1}$ then we have $a=-2^{-1}$. By Lemma \ref{cdu_symm_lemma}, $_cD_{-2^{-1}}F(x)=b$ and $_{c^{-1}}D_{2^{-1}}F(x)=-b$ have the same number of solutions in $P_a$. By Lemma \ref{cdu_symm_lemma_1r}, $_{c^{-1}}D_{2^{-1}}F(x)=-b$ and  $_{c^{-1}}D_1F_2(x)=-2^{-1}b$ have the same number of solutions in $P_a$ where $F_2 = Inv \circ (1,2)$. Hence $_cD_{-2^{-1}}F(x)=b$ has at most three solutions in $P_a$. If $a=\gamma-1$ with $\gamma\ne 2,2^{-1}$ then $P_a=\{0,1,\gamma,1-\gamma,2-\gamma\}$ and we have $_cD_{\gamma-1}F(0)=(\gamma-1)^{-1}$, $_cD_{\gamma-1}F(1)=1-c\gamma^{-1}$, $_cD_{\gamma-1}F(\gamma)=(2\gamma-1)^{-1}-c$, $_cD_{\gamma-1}F(1-\gamma)=c(\gamma-1)^{-1}$ and $_cD_{\gamma-1}F(2-\gamma)=\gamma^{-1}+c(\gamma-2)^{-1}$. We investigate equivalent condition for each case that $D_aF(x)=b$ has at least two solutions in $P_a$.
\begin{equation*}
\begin{array}{ll}
_cD_{\gamma-1}F(0)={}_cD_{\gamma-1}F(1)\ \Leftrightarrow\ c=\frac{\gamma(\gamma-2)}{\gamma-1}, &  
_cD_{\gamma-1}F(0)={}_cD_{\gamma-1}F(\gamma)\ \Leftrightarrow\ c=\frac{-\gamma}{(2\gamma-1)(\gamma-1)},\\
_cD_{\gamma-1}F(0)={}_cD_{\gamma-1}F(1-\gamma)\ \Leftrightarrow\ c=1, & 
_cD_{\gamma-1}F(0)={}_cD_{\gamma-1}F(2-\gamma)\ \Leftrightarrow\ c=\frac{\gamma-2}{\gamma(\gamma-1)},\\
_cD_{\gamma-1}F(1)={}_cD_{\gamma-1}F(\gamma)\ \Leftrightarrow\ c=-\frac{2\gamma}{2\gamma-1}, &  
_cD_{\gamma-1}F(1)={}_cD_{\gamma-1}F(1-\gamma)\ \Leftrightarrow\ c=\frac{\gamma(\gamma-1)}{2\gamma-1}, \\
_cD_{\gamma-1}F(1)={}_cD_{\gamma-1}F(2-\gamma)\ \Leftrightarrow\ c=\frac{\gamma-2}{2}, &  
_cD_{\gamma-1}F(\gamma)={}_cD_{\gamma-1}F(1-\gamma)\ \Leftrightarrow\ c=\frac{\gamma-1}{\gamma(2\gamma-1)},\\
_cD_{\gamma-1}F(\gamma)={}_cD_{\gamma-1}F(2-\gamma) \Leftrightarrow c=\frac{2-\gamma}{\gamma(2\gamma-1)},  &  
_cD_{\gamma-1}F(1-\gamma)={}_cD_{\gamma-1}F(2-\gamma) \Leftrightarrow c=-\frac{(\gamma-1)(\gamma-2)}{\gamma}.
\end{array}
\end{equation*}
\begin{itemize}
\item $_cD_{\gamma-1}F(0)={}_cD_{\gamma-1}F(1)={}_cD_{\gamma-1}F(\gamma)\ \Leftrightarrow\ c=\frac{\gamma(\gamma-2)}{\gamma-1}=\frac{-\gamma}{(2\gamma-1)(\gamma-1)}=-\frac{2\gamma}{2\gamma-1}$. Then we have $\gamma=\frac{3}{2}$ and then $c=-\frac{3}{2}$.
\item $_cD_{\gamma-1}F(0)={}_cD_{\gamma-1}F(1)={}_cD_{\gamma-1}F(1-\gamma)\ \Leftrightarrow\ c=\frac{\gamma(\gamma-2)}{\gamma-1}=\frac{\gamma(\gamma-1)}{2\gamma-1}=1$. Then we have $\gamma^2-3\gamma+1=0$, such $\gamma$ exists if and only if $(-3)^2-4=5$ is a square if and only if $p^n \equiv 0,\pm1 \pmod{5}$.
\item $_cD_{\gamma-1}F(0)={}_cD_{\gamma-1}F(1)={}_cD_{\gamma-1}F(2-\gamma)\ \Leftrightarrow\ c=\frac{\gamma(\gamma-2)}{\gamma-1}=\frac{\gamma-2}{\gamma(\gamma-1)}=\frac{\gamma-2}{2}$. Then we have $\gamma=-1$ and then $c=-\frac{3}{2}$.
\item $_cD_{\gamma-1}F(0)={}_cD_{\gamma-1}F(\gamma)={}_cD_{\gamma-1}F(1-\gamma)\ \Leftrightarrow\ c=\frac{-\gamma}{(2\gamma-1)(\gamma-1)}=\frac{\gamma-1}{\gamma(2\gamma-1)}=1$. Then we have $2\gamma^2-2\gamma+1=0$, such $\gamma$ exists if and only if $(-2)^2-8=-4$ is a square if and only if $p^n \equiv 1 \pmod{4}$.
\item $_cD_{\gamma-1}F(0)={}_cD_{\gamma-1}F(\gamma)={}_cD_{\gamma-1}F(2-\gamma)\ \Leftrightarrow\ c=\frac{-\gamma}{(2\gamma-1)(\gamma-1)}=\frac{\gamma-2}{\gamma(\gamma-1)}=\frac{2-\gamma}{\gamma(2\gamma-1)}$. Then we have $\gamma=\frac{2}{3}$ and then $c=6$.
\item $_cD_{\gamma-1}F(0)={}_cD_{\gamma-1}F(1-\gamma)={}_cD_{\gamma-1}F(2-\gamma)\ \Leftrightarrow\ c=1=\frac{\gamma-2}{\gamma(\gamma-1)}=-\frac{(\gamma-1)(\gamma-2)}{\gamma} $. Then we have $\gamma^2-2\gamma+2=0$, such $\gamma$ exists if and only if $(-2)^2-8=-4$ is a square if and only if $p^n \equiv 1 \pmod{4}$.
\item $_cD_{\gamma-1}F(1)={}_cD_{\gamma-1}F(\gamma)={}_cD_{\gamma-1}F(1-\gamma)\ \Leftrightarrow\ c=-\frac{2\gamma}{2\gamma-1}=\frac{\gamma(\gamma-1)}{2\gamma-1}=\frac{\gamma-1}{\gamma(2\gamma-1)}$. Then we have $\gamma=-1$ and then $c=-\frac{2}{3}$.
\item $_cD_{\gamma-1}F(1)={}_cD_{\gamma-1}F(\gamma)={}_cD_{\gamma-1}F(2-\gamma)\ \Leftrightarrow\ c=-\frac{2\gamma}{2\gamma-1}=\frac{\gamma-2}{2}=\frac{2-\gamma}{\gamma(2\gamma-1)}$. Then we have $2\gamma^2-\gamma+2=0$, such $\gamma$ exists if and only if $(-1)^2-4\cdot 2\cdot 2=-15$ is a square if and only if $p^n\equiv 0,1,2,4,6,8,9,10 \pmod{15}$.
\item $_cD_{\gamma-1}F(1)={}_cD_{\gamma-1}F(1-\gamma)={}_cD_{\gamma-1}F(2-\gamma)\ \Leftrightarrow\ c=\frac{\gamma(\gamma-1)}{2\gamma-1}=\frac{\gamma-2}{2}=-\frac{(\gamma-1)(\gamma-2)}{\gamma}$. Then we have $\gamma=\frac{2}{3}$ and then $c=-\frac{2}{3}$.
\item $_cD_{\gamma-1}F(\gamma)={}_cD_{\gamma-1}F(1-\gamma)={}_cD_{\gamma-1}F(2-\gamma)\ \Leftrightarrow\ c=\frac{\gamma-1}{\gamma(2\gamma-1)}=\frac{2-\gamma}{\gamma(2\gamma-1)}=-\frac{(\gamma-1)(\gamma-2)}{\gamma}$. Then we have $\gamma=\frac{3}{2}$ and then $c=\frac{1}{6}$.
\end{itemize}
From above we can see that $_cD_{\gamma-1}F(x)=b$ has at least four solutions only when $p=5$, $\gamma=-1$ and $c=1$. For example, if $_cD_{\gamma-1}F(0)={}_cD_{\gamma-1}F(1)={}_cD_{\gamma-1}F(\gamma)={}_cD_{\gamma-1}F(1-\gamma)$ then $c=-\frac{3}{2}=1=-\frac{2}{3}$ and $\gamma=\frac{3}{2}=-1$ and $\gamma^2-3\gamma+1=2\gamma^2-2\gamma+1=0$ can happen only when $p=5$. Hence we can see that all element in $P_a$ is a solution of $_cD_{\gamma-1}F(x)=b$ if and only if $p=5$, $\gamma=-1$ and $c=1$, and note that $a=3$ and $b=2$ in this case. By Lemma \ref{cdu_symm_lemma}, $_1D_{2}F(x)=3$ also have five solutions in $P_a$.
\end{appendix}

\end{document}